\renewcommand\section{\@startsection
 {section}{1}{0pt}
  {-\baselineskip}
  {0.5\baselineskip}
{\raggedright\reset@font\normalsize\bfseries\mathversion{bold}}}
\renewcommand{\subsection}{\@startsection
{subsection}{2}{0mm}
  {-\baselineskip}
{0.5\baselineskip}
{\normalfont\normalsize\itshape}}
\renewcommand{\subsubsection}{\@startsection
{subsubsection}{3}{0mm}
  {-\baselineskip}
  {0.3\baselineskip}
{\normalfont\normalsize\itshape}}
\long\def\@makefntext#1{%
  \noindent \hb@xt@ 0.2em{\hss \@makefnmark}\hskip0.5em\relax#1}%
\newcommand{\E}{\mathbb{E}}
\newcommand{\Cov}{\mathrm{Cov}}
\newcommand{\Lip}{\mathrm{Lip}}
\newcommand{\Avg}{\mathrm{Avg}}
\let\emptyset\varnothing
\theoremstyle{plain}
\newtheorem{theorem}{Theorem}
\newtheorem{lemma}{Lemma}
\newtheorem{proposition}{Proposition}
\theoremstyle{definition}
\newtheorem{definition}{Definition}
\newtheorem{example}{Example}
\newtheorem{remark}{Remark}
\newtheorem{assumption}{Assumption}
\newcounter{num}
\begin{document}
\bibliographystyle{plainnat}

\title{Full Subgame-Perfect Implementation of Optimal Risk Sharing on an Infinite Menu\footnote{An earlier version of this paper circulated under the title ``Risk Sharing Among Many: Implementing a Subgame-Perfect and Optimal Equilibrium.''}
}
\author{Michiko Ogaku\footnote{Faculty of Economics, Nagasaki University, 4-2-1 Katafuchi, Nagasaki, 850-8506 Japan}
}
\date{ }
\maketitle


\begin{abstract}
Can a welfare-maximizing risk-sharing rule be implemented in a
decentralized community?
We extend the price-and-choose (P\&C) mechanism of
Echenique and N\'u\~nez (2025), in which players sequentially post
price schedules and the last mover selects an allocation, from finite
choice sets to an infinite risk-sharing menu.
Each allocation is modeled as a bounded random vector that redistributes
an aggregate loss \(X=\sum_i X_i\), and the menu is weak$^\ast$ compact.
In subgame-perfect equilibrium, the extended mechanism fully implements
the set of allocations that maximize aggregate monetary utility over the
admissible menu.

The result remains valid when players hold heterogeneous weak$^\ast$-compact
credal sets of finitely additive probabilities, or charges, dominated by
a reference probability, provided that the induced max--min monetary
utilities are uniformly Lipschitz on the menu.
We also adapt their first-mover auction to this infinite-menu,
multiple-prior environment.
Under complete information about all payoff-relevant primitives, the
auction admits a symmetric equilibrium that equalizes ex-ante surplus.
The resulting procedure therefore achieves optimal and fair risk sharing
under heterogeneous priors without trusted third-party enforcement once
participants commit to it.
\end{abstract}

\noindent
{\bf Key words}: Risk sharing, implementation, subgame-perfect Nash equilibrium, Pareto optimality, heterogeneous priors\\
\noindent
{\bf JEL}: C72; D61; D81; D82.

\newpage 
\section{Introduction}
Can a welfare-maximizing risk-sharing arrangement be implemented in a
decentralized community?
Classical risk-sharing theory has characterized how aggregate risk should be
shared among participants in order to achieve efficiency.
Yet, when there is no trusted insurer or third-party enforcement authority,
it is far from obvious whether participants can reach such an allocation
through a decentralized procedure.
This paper addresses this question by examining the applicability of the
price-and-choose (P\&C) mechanism of Echenique and N\'u\~nez (2025),
hereafter EN, to risk-sharing environments.

\vspace{12pt}

In the P\&C mechanism, participants sequentially post price schedules over
a set of feasible allocations, and the final mover selects an allocation.
Once an allocation is selected, monetary transfers are executed according
to the posted price schedules.
EN show that, with a finite choice set and quasi-linear utilities, this
mechanism implements Pareto-optimal allocations in subgame-perfect
equilibrium. They also study extensions that relax quasi-linearity.

The key observation for the present paper is that the quasi-linear structure
of P\&C is naturally compatible with monetary utilities in finance and
insurance.
A monetary utility function is cash invariant, so a sure monetary transfer
enters utility additively.
Thus, if the prices and bids in P\&C are interpreted as deterministic
monetary transfers, EN's implementation logic can be brought to bear on
risk-sharing problems involving complex tail risks.

\vspace{12pt}

Risk-sharing problems, however, differ from EN's finite-menu environment.
The set of feasible allocations is generally not finite.
If the aggregate loss is $X=\sum_i X_i$,
then a redistribution of this loss among participants is naturally represented
by a vector of bounded random variables.
We formulate each allocation as an essentially bounded random vector in
$(L^\infty)^n$ and extend P\&C to an infinite menu.

More specifically, the choice set is a weak$^\ast$-compact set of feasible
allocations whose components have the same sign as the aggregate loss, and
admissible price schedules are Lipschitz-continuous functions defined on this
set.
The weak$^\ast$ framework allows us to retain a broad state-contingent menu,
including allocations that reflect background-risk factors associated with
uncertain events affecting the aggregate loss, while preserving compactness.

Under these conditions, we show that the extended P\&C mechanism implements,
in subgame-perfect equilibrium, an allocation that maximizes aggregate
monetary utility over the admissible menu.
Intuitively, at each stage an earlier mover can post a price schedule that
makes the continuation players indifferent across the feasible alternatives.
This structure implements an efficient allocation, but the resulting division
of surplus is tilted toward the first mover.

\vspace{12pt}

We then consider an environment in which participants have diverse and
set-valued probability assessments of the uncertain events affecting the
aggregate loss.
Each participant holds a heterogeneous credal set consisting of finitely
additive probabilities, or charges, dominated by a reference probability
$\mathbb P$.
Following the max--min approach of \citet{gilboa1989maxmin}, participants
evaluate an allocation by taking the worst case over their prior-specific
monetary utilities.

When the credal sets are weak$^\ast$ compact and the resulting max--min
monetary utility functionals are uniformly Lipschitz continuous on the
feasible set, we show that P\&C again implements, in subgame-perfect
equilibrium, an allocation that maximizes the sum of participants'
max--min monetary utilities.
The implementation result therefore extends to risk sharing among
participants who disagree about probability assessments and bring
heterogeneous credal sets to the mechanism.

\vspace{12pt}

Finally, we combine P\&C with the first-mover auction of EN and adapt the
two-stage procedure to the present infinite-menu, multiple-prior environment.
Under the basic P\&C mechanism, 
the equilibrium payoffs of non-first movers are pinned down at
their average utility levels under the normalization measure $\mu_K$. 
The first mover may therefore capture the entire ex-ante surplus generated
by the efficient allocation.

The first-mover auction endogenizes the identity of the first mover.
Under complete information about all payoff-relevant primitives, the combined
procedure implements a welfare-maximizing risk-sharing allocation and
equalizes the surplus among participants.
The resulting procedure is decentralized. Once participants have
committed to the mechanism and its transfer rules, it requires no
enforcement by a trusted third party and achieves both optimal and fair
risk sharing under heterogeneous prior beliefs.

\subsection{Related Literature}

\subsubsection{Risk sharing under monetary utilities}

By showing that a risk allocation maximizing the sum of participants'
monetary utilities can be implemented in subgame-perfect equilibrium,
this paper contributes to the literature on convex risk measures and
monetary utility functions
\citep{artzner1999application,delbaen2002coherent,
follmer2016stochastic,kaina2009convex}.
Within this framework, \citet{jouini2008optimal} study optimal risk
sharing between two agents endowed with monetary utility functions on
$L^\infty$. Under law invariance, they establish the existence of an
optimal allocation in which each component is a nondecreasing function
of aggregate risk.

Relatedly, \citet{filipovic2008optimal} establish existence and
characterization results for optimal capital and risk allocations for
law-invariant, cash-invariant convex functionals on
$L^p$, $p\in[1,\infty]$.
They further show that an optimal risk allocation can be attained through
contracts whose components are nondecreasing Lipschitz-continuous
functions of aggregate risk.
\citet{dana2011comonotonicity} establishes the existence of efficient
risk allocations and competitive equilibria for strictly concave
utilities on $L^\infty$ that preserve second-order stochastic dominance.
In the setting of \citet{dana2011comonotonicity}, this class of utilities coincides with the class of
strictly concave law-invariant utilities.

Whereas these contributions examine the existence and structure of
efficient risk allocations, we take as given a feasible menu of risk
allocations in $L^\infty$ and show that P\&C implements, in
subgame-perfect equilibrium, an allocation that maximizes the sum of
participants' monetary utilities.

\subsubsection{Subgame-perfect implementation and price-and-choose}

This paper is not the first to establish subgame-perfect implementation
in a related risk-sharing environment.
\citet{moore1988subgame} provide a classical benchmark for
subgame-perfect implementation under complete information.
They show that extensive-form stage mechanisms are very powerful for
implementing desirable social choices in economic environments with a
divisible private good, and they also discuss an application to pure
risk sharing.
Under condition $C^+$ for three or more agents and condition $C^{++}$
for two agents, they construct a general announcement-and-test mechanism
for implementing a desired social choice correspondence.

When the target social choice correspondence, that is, the mapping from
type profiles to sets of admissible allocations, is multivalued,
however, their general mechanism is not designed to prevent the
coordination failures that may arise when agents must decide which
admissible outcome to announce.
Once agents coordinate on a particular allocation, that allocation can
be implemented in equilibrium.

P\&C can likewise be interpreted as a mechanism for implementing a social
choice correspondence, but it does not require simultaneous announcements
of admissible allocations.
The realized allocation is selected by the final mover, whose choice is
disciplined by the sequentially posted price schedules.
P\&C therefore avoids the coordination problem specific to simultaneous
announcements.

\subsubsection{Foundational work on risk sharing and the structure of
efficient allocations}

Classical risk-sharing theory clarifies the structure of efficient
allocations but leaves open the problem of selecting one allocation from
among multiple efficient alternatives.
\citet[\ Section 2.5--2.6]{borch1962equilibrium} characterizes Pareto-optimal reinsurance
treaties in terms of insurers' utility functions and positive Pareto
weights.
\citet{borch1962equilibrium} shows that every Pareto-optimal treaty is equivalent to an arrangement
that pools the participants' individual risks and then allocates the
aggregate loss among them.
Because different choices of Pareto weights generate infinitely many
Pareto-optimal sharing rules, however, Pareto optimality alone does not
determine which rule should be adopted.
\citet{borch1962equilibrium} interprets this selection problem as an
$n$-person cooperative game involving negotiation among the participants.

\citet[\ Theorem 7]{wilson1968theory} analyzes Pareto-optimal sharing rules in a
syndicate whose members each hold a single, potentially different,
probability assessment.
His primary concern is not the existence of an efficient allocation
itself, but the conditions under which the collective decision making
induced by a Pareto-optimal sharing rule can be represented by a single
surrogate utility function and a single surrogate probability assessment.
Wilson shows that, when probability assessments differ, a necessary and
sufficient condition for such a surrogate representation is that the
sharing rule be linear in the following sense: the marginal share of aggregate
payoff assigned to each member must be independent of the level of the
aggregate payoff.

Subsequent research has developed along several directions.
To address the failure of Pareto optimality alone to select a unique
risk-sharing rule, \citet{buhlmann1979optimal} propose a long-run fairness
condition as an additional selection criterion.
As a market-based approach to selecting an allocation from the Pareto set,
\citet{aase1993equilibrium} formulates a reinsurance syndicate as a
pure-exchange economy under uncertainty.
\citet{aase1993equilibrium} provides conditions for the existence and characterization of a
competitive equilibrium and discusses conditions for the uniqueness of the
equilibrium and the equilibrium premium.

Regarding the structure under which efficient allocations can be
represented as functions of aggregate endowment,
\citet{landsberger1994co} show that, for every integrable allocation,
there exists a comonotone allocation in which every participant's
allocation is a nondecreasing function of aggregate endowment and that
weakly improves every participant in the sense of second-order stochastic
dominance. 
\citet{strzalecki2011efficient} introduce the notion of conditional
beliefs. They examine whether two well-known properties survive under
ambiguity-averse preferences: comonotonicity under common beliefs and
measurability with respect to aggregate endowment under concordant
beliefs. 
\citet{chateauneuf2000optimal} analyze a general-equilibrium economy
populated by agents with Choquet expected utility based on nonadditive
capacities and characterize the structure of optimal risk sharing and
equilibrium.
In particular, when agents share a common convex capacity, they show that
the set of Pareto-optimal allocations coincides with that of an
expected-utility economy with a common probability and that an optimal
allocation can be chosen to be comonotone.

\subsubsection{Heterogeneous beliefs and the scope for trade}

The effect of heterogeneous beliefs on the scope for trade has also been
studied in the literature on no-trade results.
\citet{milgrom1982information} consider an environment in which agents
begin from an ex ante Pareto-optimal initial allocation under a common
prior and subsequently acquire private information.

In a multiple-priors environment, \citet{kajii2006agreeable} distinguish
an agreeable bet on a single event from a more general agreeable trade.
They show that, in a general multiple-priors model, the existence of the
former is a stronger condition than the existence of the latter.
They further show that the two existence conditions coincide when each
agent's prior set can be represented as the core of a convex capacity,
in which case the agent's preferences also admit a Choquet expected-utility
representation.

\citet{kajii2009interim} introduce sets of posterior distributions formed
after agents acquire private information and analyze the notion of interim
efficiency formalized by \citet{morris1994trade}.
They provide necessary and sufficient conditions under Bewley-type
incomplete preferences, and sufficient conditions together with several
necessary conditions under max--min expected utility.

The max--min analyses of \citet{kajii2006agreeable} and
\citet{kajii2009interim} are closely related to the present paper in that
an agent's evaluation is obtained as the lower envelope of evaluations
under multiple probability distributions.
Their focus, however, is on the existence of agreeable bets and trades and
on interim efficiency following the acquisition of private information.
By contrast, we show that, when players act individually according to
their max--min monetary utilities in P\&C, the resulting subgame-perfect
equilibrium allocation maximizes the sum of the players' max--min
monetary utilities.

The remainder of this paper is organised as follows. Section \ref{sec:preliminaries} defines the primitives and what efficient means. 
Section \ref{sec:P&C} shows how the P\&C mechanism yields Pareto-optimal risk sharing in both two-player and multi-player settings. Section \ref{sec:credal-set} extends the P\&C risk-sharing model to cases involving distributional uncertainty. 
Section \ref{sec:bidding} describes how the first-mover auction improves the fairness of P\&C risk sharing under the multiple-priors environment. Section \ref{sec:conclusion} concludes. 

\section{Preliminaries: Environment and efficient benchmark}\label{sec:preliminaries}
\subsection{Primitives}
We work on a probability space $(\Omega, \mathcal{F}, \mathbb{P})$, where we suppose that $(\Omega, \mathcal{F}, \mathbb{P})$ has no atoms and that 
$L^1(\Omega, \mathcal{F}, \mathbb{P})$ is separable. For every $p \in [1,\infty]$, we denote by $L^p(\mathbb{P})$ the collection of all real-valued $\mathcal{F}$-measurable random variables with finite $L^p$ norm under $\mathbb{P}$. In particular, 
$L^{\infty}:=L^{\infty}(\mathbb{P})$ is the collection of essentially bounded random variables. 
Consider a finite set $N$ with $|N|=n \geq 2$ of agents endowed with initial risk positions $(X_i)_{i \in N} \in (L^{\infty})^n$. 
Let $U_i: L^{\infty} \to \mathbb{R}$ be a monetary utility function of agent $i \in N$. 
We assume that $U_i$ is concave, cash-invariant, i.e. 
\[U_i(\xi+c)=U_i(\xi)+c \quad \forall \xi \in L^{\infty}, \ \forall c \in \mathbb{R},\] 
and monotone with respect to the order of $L^{\infty}$. 
From the cash invariance and the monotonicity of each $U_i$, we see that $U_i$ is $1$-Lipschitz:
\[
\xi \leq \eta + \|\xi-\eta\|_{\infty} \Rightarrow U_i(\xi) \leq U_i(\eta) + \|\xi-\eta\|_{\infty} \Rightarrow |U_i(\xi)-U_i(\eta)| \leq \|\xi-\eta\|_{\infty}.
\]
We normalise $U_i(0)=0$. 
Cash invariance is essential because our P\&C mechanism uses deterministic transfers; it makes posted prices and bids enter payoffs linearly.  

\begin{remark}[Cash invariance / monetary transfers]
Cash invariance means that utilities concerning cash are expressed in units of a riskless numeraire: adding a sure amount $c$
to a player's payoff shifts her evaluation by $c$. This is the natural assumption when players can make
deterministic transfers (as in the posted prices $p(\xi)$ and auction bids) or have access to a risk-free asset.
It is also standard in risk-sharing and risk-measure foundations, where $U_i$ is interpreted as a certainty
equivalent (in cash units) of a risky payoff.
\end{remark}

Let $\mathbb{A}(X)$ denote the set
\[\mathbb{A}(X):=\left\{(\xi_i)_{i \in N} \in (L^{\infty})^n \bigg\vert \sum_{i \in N} \xi_i =X\right\},\]
where $X := \sum_{i \in N} X_i$ is the total risk. 
We assume that the total risk is nontrivial:
\[
\mathbb{P}(|X| >0) >0.
\]
The set $\mathbb{A}(X)$ consists of attainable risk allocations for the agents.  

Given functions $U_i$, $i \in N$, we denote by 
\[U(X):=U_1\square \cdots \square U_n(X):=\sup_{(\xi_i)_{i \in N} \in \mathbb{A}(X)} \sum_{i \in N} U_i(\xi_i), \quad X \in L^{\infty},\]
the sup-convolution of concave functions $U_i$, $i \in N$, which follows the notation of \cite{jouini2008optimal}. 

\subsection{Feasible allocations}
To consider situations where agents share losses, we focus on the cases where the sign of $\xi_i$ is the same as $X$ for all $i \in N$. More formally, write
\[
\Delta_{X} :=\{
\xi \in \mathbb{A}(X) \, \vert \, 
\xi_i X \geq 0 \ \mathbb{P}\text{-a.s.}, \ \mathrm{1}_{\{X=0\}}\xi_i=0 \ \mathbb{P}\text{-a.s.} \ \forall i \in N
\}.
\]
Throughout the remainder of the paper, let 
\[
K:=\Delta_X.
\]
The same-sign restriction is not intrinsic to the P\&C implementation
argument. Its principal role is to provide an economically interpretable
weak$^\ast$-compact menu of pure loss-sharing allocations. Indeed, under
this restriction each component satisfies
\[
|\xi_i|\leq |X|,
\]
whereas the full attainable set contains arbitrary zero-sum side bets and
is therefore unbounded.

All implementation results continue to hold for any alternative nonempty
weak$^\ast$-compact menu on which the utility functionals satisfy the
stated Lipschitz conditions. For example, one may replace the same-sign
menu by
\[
K^\kappa(X)
=
\{\xi\in\mathcal A(X):|\xi_i|\leq\kappa|X|
\text{ a.s. for all }i\}, \quad \kappa \geq 1, 
\]
which permits limited opposite-sign positions when $\kappa>1$.

\subsection{Feasible allocations equipped with weak$^{\ast}$ topology}
Because the economic cost of an allocation may depend on its joint
distribution with background and secondary-stress factors, rather than
on its marginal distribution alone \citep{doherty1983optimal}, we do not restrict allocations to
be measurable functions of aggregate loss $X$.
Thus, we equip $\Delta_{X}$ with weak$^*$ topology. 
We see that  $\Delta_{X} \subset \mathbb{A}(X)$ is compact in the weak$^{\ast}$ topology in $\sigma((L^{\infty})^n, (L^1)^n)$. 
\begin{lemma}\label{lem:Delta-compact}
Fix $X \in L^{\infty}$. Then, $\Delta_{X} \subset \mathbb{A}(X)$ is compact in the weak$^{\ast}$ topology in $\sigma((L^{\infty})^n, (L^1)^n)$. 
\end{lemma}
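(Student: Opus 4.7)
The plan is the standard Banach--Alaoglu / Tychonoff combination: exhibit $\Delta_{Xq}^{n-1}$ as a weak$^{\ast}$-closed subset of a product of norm-bounded balls in $L^{\infty}$. Recall that weak$^{\ast}$ convergence on $(L^{\infty})^{n}$ in the sense of $\sigma((L^{\infty})^n,(L^1)^n)$ coincides with coordinate-wise weak$^{\ast}$ convergence because the pairing $\langle \xi,g\rangle = \sum_i \E[\xi_i g_i]$ is linear in each coordinate (take $g_j=0$ for $j\neq i$). So it suffices to work ball-by-ball and constraint-by-constraint.

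\paragraph{Step 1: uniform norm bound.} First I would observe that on $\{X>0\}$, the sign condition forces every $\xi_i\geq 0$, and $\sum_i \xi_i = X$ gives $0\leq \xi_i\leq X$; symmetrically $X\leq \xi_i\leq 0$ on $\{X<0\}$; and on $\{X=0\}$ the anchoring condition forces $\xi_i=0$. Hence $|\xi_i|\leq |X|$ almost surely, and in particular $\|\xi_i\|_\infty \leq \|X\|_\infty$ for every $i$. Therefore $\Delta_{Xq}^{n-1}\subset B^n$, where $B$ is the closed ball of radius $\|X\|_\infty$ in $L^{\infty}$. By Banach--Alaoglu, $B$ is weak$^{\ast}$-compact; by Tychonoff, $B^{n}$ is compact in $\sigma((L^{\infty})^n,(L^1)^n)$.

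\paragraph{Step 2: weak$^{\ast}$-closedness of each defining constraint.} It then suffices to show that $\Delta_{Xq}^{n-1}$ is weak$^{\ast}$-closed inside $B^n$, which I would do by rewriting each defining constraint as an intersection of weak$^{\ast}$-closed halfspaces or hyperplanes. The budget identity $\sum_i \xi_i = X$ a.s.\ is equivalent to $\E\bigl[(\sum_i \xi_i - X)g\bigr]=0$ for every $g\in L^1$, a family of weak$^{\ast}$-continuous linear equalities. The anchoring condition $\mathbf 1_{\{X=0\}}\xi_i =0$ a.s.\ is equivalent to $\E[\xi_i \mathbf 1_{\{X=0\}} h]=0$ for all $h\in L^{\infty}$; since $\mathbf 1_{\{X=0\}} h\in L^1$, this is again weak$^{\ast}$-closed. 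For the sign constraint $\xi_i X\geq 0$ a.s.\ I would split the sample space into $A^+:=\{X>0\}$ and $A^-:=\{X<0\}$ and note that it is equivalent to $\xi_i\geq 0$ on $A^+$ and $\xi_i\leq 0$ on $A^-$, which can in turn be written as
\[
\E[\xi_i \mathbf 1_{A^+} g]\geq 0 \quad\text{and}\quad \E[-\xi_i \mathbf 1_{A^-} g]\geq 0 \qquad \forall\, g\in L^1_+.
\]
Each such inequality cuts out a weak$^{\ast}$-closed halfspace, so their intersection is weak$^{\ast}$-closed.

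\paragraph{Main obstacle.} A closed subset of a compact set is compact, which finishes the argument. The only genuinely non-routine point is the sign constraint in Step~2: one must not test against arbitrary $g\in L^1$ but only against nonnegative ones, and one must localise by the indicators of $\{X>0\}$ and $\{X<0\}$ so that the tested functional $\xi\mapsto \E[\xi_i \mathbf 1_{A^\pm} g]$ is genuinely weak$^{\ast}$-continuous (this uses $\mathbf 1_{A^\pm}g\in L^1$). Once that bookkeeping is in place, the rest is routine application of Banach--Alaoglu and Tychonoff.
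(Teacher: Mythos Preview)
Your proposal is correct and follows essentially the same route as the paper: first derive the uniform bound $|\xi_i|\le |X|$ a.s.\ from the sign and anchoring constraints, invoke Banach--Alaoglu on the resulting norm ball, and then verify weak$^{\ast}$-closedness by rewriting each of the three defining constraints (budget identity, sign on $\{X>0\}$ and $\{X<0\}$, anchoring on $\{X=0\}$) as intersections of weak$^{\ast}$-closed hyperplanes or half-spaces indexed by $L^1$ (resp.\ $L^1_+$) test functions. The only cosmetic difference is that you phrase compactness of the ambient ball via Tychonoff on a product of balls, whereas the paper applies Banach--Alaoglu directly to the ball in $(L^\infty)^n$; these are equivalent.
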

\noindent
\emph{Proof: see Appendix \ref{sec:appA}}.

\begin{remark}[Norm-compact comonotone menus] ~

The implementation argument applies to any compact metric menu and is
therefore unchanged when the weak$^\ast$ metric is replaced by the
$L^\infty$ norm. For two law-invariant monetary utilities,
Theorem~3.2 of \citet{jouini2008optimal} permits welfare maximization to
be restricted to comonotone allocations. Although the full comonotone
class is not norm compact because of deterministic zero-sum transfers,
the cash normalization used in the proof of that theorem yields a
norm-compact cross-section. Since monetary utilities are $1$-Lipschitz
under the $L^\infty$ norm, the same P\&C argument applies to this
normalized menu.
\end{remark}

\subsection{Efficiency notion and welfare benchmark}
\begin{definition}[Pareto optimal allocation] ~ \label{def:pareto-optimal}

Let $(\xi_i)_{i \in N} \in \mathbb{A}(X)$ be an attainable allocation. We say that $(\xi_i)_{i \in N}$ is Pareto optimal if for all $(\zeta_i)_{i \in N} \in \mathbb{A}(X)$:
\[U_i(\zeta_i) \geq U_i(\xi_i) \quad \forall i \Longrightarrow U_i(\zeta_i)=U_i(\xi_i) \quad \forall i. \]
\end{definition}

\begin{theorem}\label{thm:sup-convolution-pareto-optimal}
Let $(U_i)_{i \in N}$ be a sequence of monetary utility functions. For a given aggregate risk $X \in L^{\infty}$ and $(\xi)_{i \in N} \in \mathbb{A}(X)$, the following statements are equivalent:
\begin{itemize}
\item[(i)] $(\xi_i)_{i \in N}$ is a Pareto optimal allocation.
\item[(ii)] $U_1\square \cdots \square U_n(X)=\sum_{i \in N} U_i(\xi_i)$
\end{itemize}
\end{theorem}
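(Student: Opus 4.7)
The plan is to prove both implications by contradiction, with the cash invariance of each $U_i$ supplying the key mechanism for converting gains in aggregate utility into gains for every individual agent.

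The direction (ii) $\Rightarrow$ (i) should be essentially immediate. If $(\xi_i)_{i \in N}$ attains $U_1 \square \cdots \square U_n(X)$ but fails Pareto optimality, then some $(\zeta_i)_{i \in N} \in \mathbb{A}(X)$ weakly Pareto-dominates it with strict inequality for at least one agent, whence $\sum_i U_i(\zeta_i) > \sum_i U_i(\xi_i) = U_1 \square \cdots \square U_n(X)$, contradicting the definition of the sup-convolution as a supremum over $\mathbb{A}(X)$.

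For the non-trivial direction (i) $\Rightarrow$ (ii), I would assume $(\xi_i)_{i \in N}$ is Pareto optimal while $\sum_i U_i(\xi_i) < U_1 \square \cdots \square U_n(X)$ and derive a contradiction. First, pick any $(\zeta_i)_{i \in N} \in \mathbb{A}(X)$ whose aggregate utility strictly exceeds $\sum_i U_i(\xi_i)$, and let $\delta > 0$ denote the gap. The key step is to use cash invariance to spread the surplus equally across agents: set $c_i := U_i(\xi_i) - U_i(\zeta_i) + \delta/n$ and $\zeta'_i := \zeta_i + c_i$. Because $\sum_i c_i = 0$, feasibility $\sum_i \zeta'_i = X$ is preserved, so $(\zeta'_i)_{i \in N} \in \mathbb{A}(X)$; and cash invariance yields $U_i(\zeta'_i) = U_i(\zeta_i) + c_i = U_i(\xi_i) + \delta/n > U_i(\xi_i)$ for every $i$. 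This strict Pareto improvement contradicts optimality of $(\xi_i)_{i \in N}$.

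The substantive content is the redistribution step; the rest is bookkeeping. The obstacle to a more naive argument is that, in general, one cannot move freely between Pareto improvements and gains in aggregate utility, and this is precisely where the monetary-utility structure (cash invariance) earns its keep: deterministic zero-sum transfers translate a total-utility surplus into an equal per-capita utility boost. I would also note in passing that attainment of the supremum is never invoked in the argument, so the equivalence automatically covers the degenerate case $U_1 \square \cdots \square U_n(X) = +\infty$, in which both (i) and (ii) fail vacuously.
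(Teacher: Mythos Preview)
Your proof is correct, and in fact considerably more elementary than the paper's. The direction (ii) $\Rightarrow$ (i) is handled identically. For (i) $\Rightarrow$ (ii), however, the paper proceeds by a Hahn--Banach separation argument: it defines the utility-possibility set $B=\{(U_i(\zeta_i))_{i\in N}:(\zeta_i)\in\mathbb{A}(X)\}-\mathbb{R}^n_+$, uses concavity of the $U_i$ to show $B$ is convex, separates $B$ from the open orthant $C=\{(U_i(\xi_i))\}+(0,\infty)^n$, and then invokes cash invariance (via $B=B+D$ with $D=\{c:\sum_i c_i=0\}$) to force the separating normal to be proportional to $(1,\dots,1)$. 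Your argument bypasses all of this by constructing the Pareto improvement $\zeta'_i=\zeta_i+c_i$ directly from any allocation with strictly higher aggregate utility, using cash invariance alone to convert the aggregate gap $\delta$ into a uniform per-capita gain $\delta/n$. This is the standard transferable-utility shortcut: it needs no convex analysis, no separation theorem, and not even concavity of the $U_i$. The paper's route is the natural one when cash invariance is unavailable and one must fall back on weighted-utility characterizations of the Pareto frontier; but for monetary utilities as stated here, your construction is both sufficient and sharper.
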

This equivalence is proved for two agents by \citet[\,Thm.4.4]{barrieu2005inf} and again by \citet[\,Thm.3.1]{jouini2008optimal}.
\cite{barrieu2005inf} mention that the argument extends by induction to any finite number of agents, but do not formulate the 
$n$-agent statement explicitly.
For completeness and to keep notation consistent with the present paper, we reproduce a full, self-contained proof for the general 
$n$-agent case in Appendix \ref{sec:appB}.

\subsection{Efficiency notion including monetary transfers}
Our feasible allocation $\Delta_X$ is not closed under addition of zero-sum transfers, which is a property of $\mathbb{A}(X)$ ensuring Theorem \ref{thm:sup-convolution-pareto-optimal}. To retrieve this property, we define an efficient notion including monetary transfers.

Let
\[
D_0
:=
\left\{
c\in\mathbb R^n:
\sum_{i=1}^n c_i=0
\right\},
\qquad
\widetilde K:=K+D_0.
\]
The set $K$ describes admissible risk allocations, while
$\widetilde K$ describes final payoff allocations after deterministic
zero-sum transfers.

\begin{definition}[Transfer Pareto efficiency] ~ \label{def:t-Pareto}

We call $\xi\in K$ transfer-Pareto efficient if
$\xi+c$ is Pareto optimal in $\widetilde K$ for some, equivalently every,
$c\in D_0$.
\end{definition}

\begin{proposition}[Transfer-Pareto efficiency] ~ \label{prop:transfer-Pe}
For any $\xi\in K$, the following are equivalent:
\begin{itemize}
\item[(i)] $\xi+c$ is Pareto optimal in $\widetilde K$
      for some, equivalently every, $c\in D_0$;
\item[(ii)] $\xi\in
\arg\max_{\eta\in K}
\sum_{i=1}^n U_i(\eta_i)$.
\end{itemize}
\end{proposition}
\begin{proof}
$(i) \Longrightarrow (ii)$ Suppose (ii) does not hold: there is $\eta \in K$ such that 
\[
s:=\sum_i U_i(\eta_i) - \sum_i U_i(\xi_i) >0.
\]
Choose $r=(r_1,\cdots, r_n)$ with $\sum_i r_i =s$, $r_i \geq 0$ for all $i$, and 
 set $c'=(c'_1,\cdots, c'_n)$ such that
\[
c'_i =U_i(\xi_i) +c_i -U_i(\eta_i) + r_i.
\] Then $\sum_i c'_i =0$ and 
\[
U_i(\eta_i +c'_i) =U_i(\xi_i +c_i) + r_i,
\] and we see $\eta+c'$ Pareto improves $\xi+c$. 

$(ii) \Longrightarrow (i)$ Suppose $\eta + c' \in \widetilde K$ Pareto dominates $\xi +c$. Then we would have from cash invariance and $\sum_i c_i = \sum_i c'_i =0$, 
\[
\sum_i U_i(\eta_i) =\sum_iU_i(\eta_i + c'_i) > \sum_i U_i(\xi_i+c_i) =\sum_i U_i(\xi_i).
\] A contradiction. 
\end{proof}
Proposition~\ref{prop:transfer-Pe} shows that the set of transfer-Pareto-efficient
risk allocations is exactly
\[
\arg\max_{\eta\in K}
\sum_{i=1}^nU_i(\eta_i).
\]

\section{Price-and-Choose mechanism on an infinite menu}\label{sec:P&C}
Now we study the P\&C mechanism on an infinite menu. 
For this purpose, we introduce the set of $L$-Lipschitz price schedules.

Since $L^1(\Omega, \mathcal{F}, \mathbb{P})$ is separable, the weak$^{\ast}$ topology on 
$K:=\Delta_{X} \subset (L^{\infty})^n$ is metrizable on bounded subsets \citep[][\,Thm.6.30]{aliprantis2006infinite}. 

We fix a countable dense set $\{h_m\}$ in the unit ball of $(L^1)^n$ and set a metric $d_{w^{\ast}}: K \times K \to \mathbb{R}$ such that 
\[
d_{w^{\ast}}(\xi,\eta) := \sum_{m \geq 1} 2^{-m} |\langle \xi-\eta, h_m\rangle|,
\]
which metrizes the weak$^{\ast}$ topology on $K$.  We use the separability of $L^1$ only to metrize the weak$^{\ast}$ topology; no other step requires it.

For $f:K\to\mathbb{R}$, define the Lipschitz seminorm
\[
\Lip_{d_{w^{\ast}}}(f)
:= \sup_{\xi\neq\eta}\frac{|f(\xi)-f(\eta)|}{d_{w^{\ast}}(\xi,\eta)}\in[0,\infty],
\]
(with the convention that $\Lip_{d_{w}^{\ast}}(f)=0$ for constant $f$).

\begin{lemma}\label{lem:P_n-compact}
For a compact metric space $(K,d)$ and for some Borel probability measure $\mu$ on $K$, the set 
\[
\mathcal{P}_{L}:=\left\{f \in C(K) : \Lip_d(f) \leq L, \ \int_K f d\mu=0 \right\}
\] is compact in $(C(K),\|\cdot\|_{\infty})$. 
\end{lemma}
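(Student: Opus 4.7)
The natural route is Arzelà--Ascoli: I will verify that $\mathcal{P}_L$ is uniformly bounded, equicontinuous, and closed in $(C(K),\|\cdot\|_\infty)$, and then invoke the theorem.

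First, equicontinuity is immediate: every $f \in \mathcal{P}_L$ satisfies $|f(\xi)-f(\eta)| \le L\,d(\xi,\eta)$, so a common modulus of continuity $\omega(t)=Lt$ works for the whole family.

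Next, the zero-mean normalization delivers a uniform sup-norm bound. Since $K$ is compact and $f$ is continuous, $f$ attains its minimum $m$ and maximum $M$ on $K$. From $\int_K f\,d\mu=0$ and the fact that $\mu$ is a probability measure, one has $m \le 0 \le M$. The Lipschitz hypothesis gives $M - m \le L\,\mathrm{diam}(K) < \infty$, hence $|f(\xi)| \le L\,\mathrm{diam}(K)$ for every $\xi \in K$ and every $f \in \mathcal{P}_L$. This is where the normalization $\int f\,d\mu = 0$ is essential; without it, the Lipschitz constraint alone only pins down $f$ up to an additive constant.

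It remains to check closedness. If $(f_k) \subset \mathcal{P}_L$ converges uniformly to some $f \in C(K)$, then the Lipschitz bound passes to the pointwise limit, $|f(\xi)-f(\eta)| = \lim_k |f_k(\xi)-f_k(\eta)| \le L\,d(\xi,\eta)$, and uniform convergence together with $\mu(K)=1$ yields $\int_K f\,d\mu = \lim_k \int_K f_k\,d\mu = 0$. Thus $f \in \mathcal{P}_L$. Combined with uniform boundedness and equicontinuity, Arzelà--Ascoli then gives compactness.

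No step looks genuinely hard; the only place where one has to be mildly careful is the sup-norm bound, which relies on combining the mean-zero condition (to trap $0$ between $\min f$ and $\max f$) with the Lipschitz bound on the oscillation $\max f - \min f$. Everything else is a routine diagonal/Arzelà argument.
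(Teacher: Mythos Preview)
Your proof is correct and follows essentially the same route as the paper: equicontinuity from the common Lipschitz bound, uniform boundedness via $m\le 0\le M$ together with $M-m\le L\,\mathrm{diam}(K)$, closedness under uniform limits, and then Arzel\`a--Ascoli. The paper's argument is identical in structure and in the key estimates.
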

\noindent
\emph{Proof: see Appendix \ref{sec:appC}}.

\begin{definition}
Taking any full-support Borel probability measure $\mu_{K}$ on $K$, we define the set of price schedules such that 
\begin{align}
P_n:=\left\{
p \in C(K) \Big\vert 
\Lip_{d_{w^{\ast}}}(p) \leq (n-1)L, \ \int_{K} p(\xi) d\mu_{K}(\xi) =0 \label{eq:price-schedule}
\right\}.
\end{align}
\end{definition}
From Lemma \ref{lem:P_n-compact}, $P_n$ is compact in $(C(K), \|\cdot\|_{\infty})$. 
A Borel probability measure $\mu$ on $(K,d)$ has full support if $\mu(O)>0$ for every nonempty open set $O \subset K$.  
Since $(K,d_{w^{\ast}})$ is a compact metric space, it is separable. Let $\{x_n\}_{n \geq 1}$ be a countable dense subset of $K$. Define 
\[
\mu_{K}=\sum_{n=1}^{\infty} 2^{-n} \delta_{x_n}. 
\] Then $\mu_{K}$ is a Borel probability measure on $K$. Moreover, $\mu_{K}$ has full support: for any nonempty open set $O \subset K$, density implies $x_m \in O$ for some $m$, hence $\mu_{K}(O) \geq 2^{-m}>0$.

The integral condition $\int_K p \ d\mu_{K}=0$ is a normalization that fixes the otherwise arbitrary additive constant in a price schedule. Adding a constant $c$ to $p$ does not affect any follower's choice ($\arg\max_{\xi} \{U_i(\xi)-p(\xi)\}=\arg\max_{\xi}(U_i(\xi)-p(\xi)-c)$), so only relative prices matter. It affects the level of transfers, but not the implementation allocation. 
The normalization selects one representative from each equivalence class $p+c$ and, together with the Lipschitz bound, makes $P_n$ bounded/compact. 

To simplify notation, for $\xi=(\xi_1,\dots,\xi_n)\in K$ we write $U_i(\xi)$ as shorthand for $U_i(\xi_i)$.

\medskip
\noindent
\begin{assumption}[$d_{w^{\ast}}$-Lipschitz continuity] ~\label{ass:d_w*-L}

For each $i\in N$ there exists $L_i<L$ such that
\[
|U_i(\xi)-U_i(\eta)| \le L_i\, d_{w^{\ast}}(\xi,\eta)\qquad \forall\,\xi,\eta\in K.
\]
In particular, $U_i\in C(K)$.
\end{assumption}
\medskip
\noindent
We impose this regularity for two reasons. First, the mechanism restricts admissible price schedules to be $L$-Lipschitz on $(K,d_{w^{\ast}})$.
To construct the indifference price schedule $p^{\ast}(\xi)=U_j(\xi)-\int_K U_j\,d\mu_{K}$ and to allow small perturbations of $p^{\ast}$ in the
implementation proof of Theorem \ref{thm:2-player}, we need $\Lip_{d_{w^{\ast}}}(U_j)<L$, so that $p^{\ast}\in P_2$ and there is slack under the Lipschitz cap.
Second, continuity of payoffs on the compact menu is a standard sufficient condition ensuring that followers' best responses are attained; without at least
upper semicontinuity, existence arguments based on maximization over $K$ are difficult.

\begin{example}[Factor-sensitive sharing of a residual catastrophic layer]
\label{ex:catastrophic-layer}

A syndicate first places the layers of catastrophic loss that can be accepted
by outside markets. These may include lower or intermediate layers that are
sufficiently standard, priced, or contractible. The remaining layer is not
transferred to outside counterparties and must be allocated internally among
syndicate members.

Let $n=3$, and let $S\geq0$ denote the aggregate gross catastrophic loss.
Suppose that outside markets absorb the ceded loss $C(S)$, where
\[
0\leq C(S)\leq S.
\]
The residual loss retained by the syndicate is
\[
R(S)=S-C(S).
\]
Let
\[
X:=-R(S)\leq0
\]
denote the residual payoff to be allocated internally.

Motivated by the basis-risk concerns associated with catastrophe-linked
securities, as studied by \citet{cummins2004basis}, let 
 $A_1,A_2\in\mathcal F$ be economically identified stress events, where
$A_1$ represents a surge in reconstruction costs and $A_2$ represents a
secondary crisis involving tighter liquidity constraints. Assume
$0<\mathbb P(A_m)<1$, and define the centered stress factors
\[
Z_m
=
\mathbf 1_{A_m}-\mathbb P(A_m),
\qquad m=1,2.
\]
Then
\[
\mathbb E[Z_m]=0.
\]

Member $i$ evaluates an allocation $\xi_i\in L^\infty(\mathbb P)$ according
to the factor-sensitive monetary utility
\[
U_i(\xi_i)
=
\mathbb E[\xi_i]
+
\sum_{m=1}^2
\psi_{im}
\left(
\mathbb E[Z_m\xi_i]
\right),
\]
where
\[
\psi_{im}(a)
=
-\frac{\beta_{im}}{\gamma_{im}}
\log\cosh(\gamma_{im}a),
\qquad
\beta_{im}>0,\quad
\gamma_{im}>0.
\]
The parameter $\beta_{im}$ controls the maximal marginal penalty attached
to member $i$'s exposure to factor $m$, while $\gamma_{im}$ controls the
curvature of this penalty.

Since
\[
\psi_{im}'(a)
=
-\beta_{im}\tanh(\gamma_{im}a),
\qquad
\psi_{im}''(a)
=
-\beta_{im}\gamma_{im}
\operatorname{sech}^2(\gamma_{im}a)<0,
\]
the function $\psi_{im}$ is strictly concave and
$\beta_{im}$-Lipschitz. Moreover, $\psi_{im}(0)=0$, so $U_i(0)=0$.
The centering condition $\mathbb E[Z_m]=0$ implies
\[
U_i(\xi_i+c)=U_i(\xi_i)+c,
\qquad c\in\mathbb R,
\]
and hence guarantees cash invariance.

Writing $p_m=\mathbb P(A_m)$, we have
\[
\mathbb E[Z_m\xi_i]
=
p_m(1-p_m)
\left(
\mathbb E[\xi_i\mid A_m]
-
\mathbb E[\xi_i\mid A_m^c]
\right).
\]
Thus, $\mathbb E[Z_m\xi_i]$ measures member $i$'s exposure
to stress event $A_m$.

Assume that
\[
\sum_{m=1}^2
\beta_{im}\|Z_m\|_\infty<1
\qquad\text{for every }i.
\]
Then $U_i$ is strictly monotone. Indeed, if $\xi_i\geq\eta_i$ a.e. $\mathbb{P}$ and
$D:=\xi_i-\eta_i$, then
\begin{align*}
U_i(\xi_i)-U_i(\eta_i)
&\geq
\mathbb E[D]
-
\sum_{m=1}^2
\beta_{im}
\left|
\mathbb E[Z_mD]
\right|\\
&\geq
\left(
1-
\sum_{m=1}^2
\beta_{im}\|Z_m\|_\infty
\right)
\mathbb E[D].
\end{align*}

Regarding $d_w^*$-Lipschitz continuity, choose the dense sequence defining $d_{w^*}$ so that it contains $e_i\otimes\mathbf 1$ and
$e_i\otimes Z_m/\|Z_m\|_1$ for every $i$ and $m$. Then, if $h_{r_{im}}=e_i\otimes Z_m/\|Z_m\|_1$, then 
\begin{align*}
|\E[Z_m(\xi_i-\eta_i)]| &=\|Z_m\|_1 |\langle \xi-\eta, h_{r_{im}} \rangle | \\
&\leq 2^{r_{im}}\|Z_m\|_1 d_{w^*}(\xi,\eta).
\end{align*} 
A similar inequality holds for $h_{r_{i0}}=e_i\otimes\mathbf 1$. 
Since $\psi_{im}$ is $\beta_{im}$-Lipschitz, we have 
\begin{align*}
|U_i(\xi_i)-U_i(\eta_i)| &\leq |\E[\xi_i-\eta_i]|+\sum_m \beta_{im} |\E[Z_m(\xi_i-\eta_i)]| \\ 
&\leq \left(2^{r_{i0}} + \sum_m \beta_{im} 2^{r_{im}}\|Z_m\|_1 \right) d_{w^*}(\xi,\eta).
\end{align*} 
Thus,
\[
\Lip_{d_{w^\ast}}(U_i)
\leq
2^{r_{i0}}
+
\sum_{m=1}^2
\beta_{im}2^{r_{im}}\|Z_m\|_1,
\] and Assumption \ref{ass:d_w*-L} is satisfied whenever the mechanism's price-schedule cap $L$ is chosen larger than the maximum of these constraints. 

To obtain a closed-form benchmark, suppose that
$\beta_{im}=\beta_m$ and $\gamma_{im}=\gamma_i$.
Define
\[
\lambda
=
\left(
\sum_{r=1}^3\gamma_r^{-1}
\right)^{-1},
\qquad
w_i=\frac{\lambda}{\gamma_i}.
\]
Then we have $w_i>0$, $\sum_{i=1}^3w_i=1$.

For any feasible allocation, let $a_{im}:=\mathbb E[Z_m\xi_i]$, 
$\overline{a}_m:=\mathbb E[Z_mX]$.
Feasibility implies
\[
\sum_{i=1}^3a_{im}=\overline{a}_m.
\]
At $a_{im}^\ast=w_i\overline{a}_m$,
\[
\psi_{im}'(a_{im}^\ast)
=
-\beta_m\tanh(\lambda \overline{a}_m),
\]
which is independent of $i$. Since each $\psi_{im}$ is strictly concave,
$(w_i\overline{a}_m)_{i=1}^3$ is the unique optimal allocation of factor-$m$
exposure.

The proportional allocation
\[
\xi_i^\ast=w_iX,
\qquad i=1,2,3,
\]
simultaneously realizes these optimal exposures, since
\[
\mathbb E[Z_m\xi_i^\ast]
=
w_i\overline{a}_m.
\]
It therefore maximizes aggregate monetary utility. Since $w_i\geq0$ and
$X\leq0$, this allocation belongs to the loss-sharing menu
$\Delta_{X}$.
\end{example}

\subsection{Price-and-choose risk sharing with two players}
In this subsection, we set $n=2$, so the feasible set is $K=\Delta_{X}$. 
We assume that players 1 and 2 commit to using the Price and Choose mechanism adapted from \cite{echenique2025price}. 
Under the mechanism, player 1 sets a price function $p \in P_2$, which is a Lipschitz continuous function on $K$. 
We assume that a price $p(\xi)$ represents the amount that player 2 pays to player 1 if player 2 chooses the allocation $\xi$ from their choice set $K$.  
Each price might be either positive or negative, 
and they balance if they add up over the menu $K$. 
Timing is given as follows. 
\begin{itemize}
\item[1.] Player 1 sets $p \in P_2$, where $P_2$ is defined in \eqref{eq:price-schedule}. 
\item[2.] Player 2 chooses $\xi \in K$ and pays $p(\xi)$ to player 1.
\end{itemize}
The payoff according to the mechanism is given by
\[g(p,\xi) =(g_1(p,\xi), g_2(p,\xi)) =(U_1(\xi)+p(\xi), U_2(\xi)-p(\xi)).\]
The mechanism provides an extensive form game, where a strategy profile $\sigma=(\sigma_1, \sigma_2)$ is given by 
$\sigma_1 \in P_2$ and $\sigma_2: P_2 \to K$. 
Following \cite{echenique2025price}, we say that P\&C mechanism 
fully implements the set of transfer-Pareto-efficient allocations 
that are SPNE if the following two conditions are satisfied.
\begin{itemize}
\item[1.] Any SPNE $\sigma=(\sigma_1,\sigma_2)$, $\sigma_2(\sigma_1)$ is transfer-Pareto efficient in the sense of Definition \ref{def:t-Pareto}.
\item[2.] For any transfer-Pareto efficient risk sharing allocation $\xi$, there is a SPNE $\sigma=(\sigma_1,\sigma_2)$ such that $\xi=\sigma_2(\sigma_1)$.
\end{itemize}
\begin{theorem}[Two-player P\&C implements efficiency on $K$]\label{thm:2-player} ~

\begin{itemize}
\item[(i)] In any SPNE, the chosen allocation $\xi \in K$ maximizes $U_1(\xi)+U_2(\xi)$.
\item[(ii)] For every transfer-Pareto-efficient allocation
\[
\xi\in
\arg\max_{\eta\in K}
\{U_1(\eta)+U_2(\eta)\},
\]
there exists an SPNE implementing $\xi$.
%
\end{itemize}
\end{theorem}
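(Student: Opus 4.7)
The plan is to exploit the cash invariance of $U_2$ together with the Lipschitz slack $L_2 < L$ to construct an indifference price schedule that lets player 1 appropriate essentially the full joint surplus; a small Lipschitz perturbation then resolves tie-breaking.

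For part (i), I would argue by contradiction. Suppose the SPNE outcome $\xi^0 := \sigma_2(\sigma_1)$ satisfies $U_1(\xi^0) + U_2(\xi^0) < U_1(\xi^{\ast}) + U_2(\xi^{\ast})$ for some welfare maximizer $\xi^{\ast} \in K$. The follower's best-response inequality $U_2(\xi^0) - \sigma_1(\xi^0) \geq U_2(\xi) - \sigma_1(\xi)$ for all $\xi$, integrated against $\mu_{Xq}$ in the variable $\xi$ and combined with $\int_K \sigma_1 \, d\mu_{Xq} = 0$, yields
\[
\sigma_1(\xi^0) \leq U_2(\xi^0) - \int_K U_2 \, d\mu_{Xq},
\]
so player 1's equilibrium payoff is at most $U_1(\xi^0) + U_2(\xi^0) - \int_K U_2 \, d\mu_{Xq}$. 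I would then construct the deviation
\[
p^{\varepsilon}(\xi) := U_2(\xi) + \varepsilon \, d_{w^{\ast}}(\xi, \xi^{\ast}) - c_\varepsilon,
\]
with $c_\varepsilon$ chosen so $\int_K p^{\varepsilon} \, d\mu_{Xq} = 0$. Because $d_{w^{\ast}}(\cdot, \xi^{\ast})$ has Lipschitz constant $1$, for any $\varepsilon \in (0, L - L_2)$ we have $\mathrm{Lip}_{d_{w^{\ast}}}(p^{\varepsilon}) \leq L_2 + \varepsilon < L$, hence $p^{\varepsilon} \in P_2$. Under $p^{\varepsilon}$ the follower's payoff equals $c_\varepsilon - \varepsilon \, d_{w^{\ast}}(\xi, \xi^{\ast})$, which is uniquely maximized at $\xi^{\ast}$, forcing $\sigma_2(p^{\varepsilon}) = \xi^{\ast}$ in any SPNE continuation. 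Player 1's deviation payoff $U_1(\xi^{\ast}) + U_2(\xi^{\ast}) - c_\varepsilon$ then exceeds the equilibrium upper bound by the strict efficiency gap minus an $O(\varepsilon)$ term, yielding a profitable deviation and contradicting SPNE.

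For part (ii), fix any efficient $\xi^{\ast}$ and set $\sigma_1 = p^{\ast}$ with $p^{\ast}(\xi) := U_2(\xi) - \int_K U_2 \, d\mu_{Xq}$; the slack $L_2 < L$ ensures $p^{\ast} \in P_2$. Under $p^{\ast}$ the follower is exactly indifferent across $K$, so every $\xi$ is a best response; put $\sigma_2(p^{\ast}) := \xi^{\ast}$ and for each $p \neq p^{\ast}$ let $\sigma_2(p)$ be an arbitrary selector from $\arg\max_{\xi \in K}\{U_2(\xi) - p(\xi)\}$, which is nonempty by compactness of $K$ (Lemma \ref{lem:Delta-compact}) together with continuity of $U_2 - p$. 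To verify leader optimality, note that for any deviation $p$ with induced follower choice $\hat\xi := \sigma_2(p)$, integrating the best-response inequality gives $p(\hat\xi) \leq U_2(\hat\xi) - \int_K U_2 \, d\mu_{Xq}$, so player 1's deviation payoff is at most $U_1(\hat\xi) + U_2(\hat\xi) - \int_K U_2 \, d\mu_{Xq}$, which by efficiency of $\xi^{\ast}$ does not exceed the on-path payoff $U_1(\xi^{\ast}) + U_2(\xi^{\ast}) - \int_K U_2 \, d\mu_{Xq}$.

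The main obstacle is the tie-breaking step in part (i): the naive indifference schedule $p^{\ast}$ leaves player 2 indifferent across all of $K$, so a pathological SPNE selection could still return an inefficient allocation. This is exactly what the Lipschitz-slack assumption $L_2 < L$ handles: it leaves room for an $O(\varepsilon)$ perturbation of $p^{\ast}$ by $\varepsilon\, d_{w^{\ast}}(\cdot, \xi^{\ast})$ that both respects the mechanism's Lipschitz cap and strictly breaks indifference toward $\xi^{\ast}$. Compactness of $K$ and continuity of admissible price schedules guarantee throughout that the follower's maximum is attained, and the cash invariance of $U_2$ is what makes $U_2 - p^{\ast}$ constant in $\xi$ to begin with.
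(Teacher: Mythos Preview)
Your proposal is correct and follows essentially the same route as the paper: the indifference schedule $p^{\ast}(\xi)=U_2(\xi)-\int_K U_2\,d\mu_{Xq}$, the upper bound on player~1's payoff obtained by integrating the follower's best-response inequality against $\mu_{Xq}$, and a small Lipschitz perturbation (you use $\varepsilon\,d_{w^{\ast}}(\cdot,\xi^{\ast})$ where the paper uses $-\varepsilon\,\iota/(\iota+d_{w^{\ast}}(\cdot,\xi^{\max}))$, but both exploit the slack $L_2<L$ identically) to force the follower onto $\xi^{\ast}$. The paper additionally pins down $\sigma_1=p^{\ast}$ in every SPNE via the full-support property of $\mu_{Xq}$, but this is not required for the statement of~(i), so your more direct contradiction argument suffices.
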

To prove this theorem, we use the following lemma. 

\begin{lemma}\label{lem:closed-set-A_2}
Fix $p\in P_2$. Let $K:=\Delta_{X}$ endowed with the weak$^{\ast}$ topology $\sigma((L^\infty)^2,(L^1)^2)$.
Then the best–response set
\[
\mathbb{A}_p^2\ :=\ \arg\max_{\xi\in K}\,\big\{\,U_2(\xi)-p(\xi)\,\big\}
\]
is nonempty and compact (in the weak$^{\ast}$ topology on $K$).
\end{lemma}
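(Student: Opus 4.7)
The proof is a routine Weierstrass extreme-value argument once all ingredients are in place, so the plan is mainly to assemble them carefully. First I would recall from Lemma~\ref{lem:Delta-compact} that $K=\Delta^{1}_{Xq}$ is weak$^{\ast}$-compact in $\sigma((L^\infty)^2,(L^1)^2)$, and from the separability assumption on $L^1$ that this weak$^{\ast}$ topology on $K$ is metrized by $d_{w^{\ast}}$. Hence $(K,d_{w^{\ast}})$ is a compact metric space.

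Next I would verify that the objective $\xi\mapsto U_2(\xi)-p(\xi)$ is continuous on $(K,d_{w^{\ast}})$. Continuity of $U_2$ is immediate from the $d_{w^{\ast}}$-Lipschitz assumption (with constant $L_2<L$). Continuity of $p$ is built into the definition of $P_2$, which consists of elements of $C(K)$ that are $L$-Lipschitz with respect to $d_{w^{\ast}}$. Therefore $U_2-p\in C(K)$.

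With $K$ nonempty and compact and $U_2-p$ continuous, the Weierstrass theorem gives that the supremum is attained, so $\mathbb{A}_p^2$ is nonempty. For compactness, let $M:=\max_{\xi\in K}\{U_2(\xi)-p(\xi)\}$; then
\[
\mathbb{A}_p^2=(U_2-p)^{-1}(\{M\}),
\]
which is the preimage of a closed singleton under a continuous map, hence closed in $K$. A closed subset of a compact space is compact, so $\mathbb{A}_p^2$ is weak$^{\ast}$-compact.

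There is no real obstacle here; the only subtlety is to make sure the weak$^{\ast}$ topology inherited by $K$ and the metric topology induced by $d_{w^{\ast}}$ coincide, so that continuity of $U_2$ as a $d_{w^{\ast}}$-Lipschitz map translates into weak$^{\ast}$-continuity on $K$. This is handled by the metrizability statement in the separability assumption, and after that invoking Lemma~\ref{lem:Delta-compact} and the definition of $P_2$ closes the argument.
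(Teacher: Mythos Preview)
Your proof is correct and follows essentially the same route as the paper's: invoke Lemma~\ref{lem:Delta-compact} for weak$^{\ast}$-compactness of $K$, note that $U_2$ and $p$ are continuous on $K$, apply Weierstrass for nonemptiness, and observe that $\mathbb{A}_p^2=(U_2-p)^{-1}(\{M\})$ is closed in the compact set $K$. Your extra care in linking the $d_{w^{\ast}}$-Lipschitz assumption to weak$^{\ast}$-continuity via metrizability is a welcome clarification but not a departure from the paper's argument.
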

\noindent
\emph{Proof: see Appendix \ref{sec:appD}}.

\begin{proof}[\it Proof of Theorem \ref{thm:2-player}]

Step 1: 
We begin by proving the existence of a price schedule that makes player 2 indifferent between all options, and the uniqueness of the price schedule. 

Define $\mathrm{Avg}_2:=\int_K U_2(\xi)\,d\mu_{K}(\xi)$ and set
\[
p^*(\xi):=U_2(\xi)-\mathrm{Avg}_2,\qquad \xi\in K.
\]
Then $U_2(\xi)-p^*(\xi)\equiv \mathrm{Avg}_2$, so player 2 is indifferent among all
$\xi\in K$ under $p^*$.
Moreover, $\int_K p^*\,d\mu_{K}= \int_K U_2\,d\mu_{K}-\mathrm{Avg}_2=0$ and
$\Lip(p^*)=\Lip(U_2)<L$, hence $p^*\in P_2$.

Uniqueness: if $p\in P_2$ makes player 2 indifferent, i.e. $U_2(\xi)-p(\xi)\equiv \theta$ on $K$,
then, since $\mu_{K}$ is a probability measure,
\[
\theta=\int_K \theta\,d\mu_{K}
=\int_K (U_2-p)\,d\mu_{K}
=\int_K U_2\,d\mu_{K}-\underbrace{\int_K p\,d\mu_{K}}_{=0}
=\mathrm{Avg}_2,
\]
so $p(\xi)=U_2(\xi)-\mathrm{Avg}_2=p^*(\xi)$ for all $\xi\in K$.
 
Step 2: 
 Next, we claim that if $\sigma$ is a SPNE, then 
 \begin{align*}
&\sigma_1=p^*\\ 
& \sigma_2(p^*) \in \arg\max_{\xi \in K} W(\xi), \ W(\xi):=U_1(\xi)+U_2(\xi).
 \end{align*}
 
For any $p \in P_2$, define 
\[
v_2(p):=\max_{\xi \in K} (U_2(\xi)-p(\xi)).
\]
Then we have 
\[
v_2(p) \geq \int_K (U_2(\xi)-p(\xi)) d \mu_{K}=\mathrm{Avg}_2.
\]
Choose $\xi_p \in \mathbb{A}_p^2$. Player 1's payoff at $(p,\xi_p)$ is given by 
\[
U_1(\xi_p) + p(\xi_p) = U_1(\xi_p) + U_2(\xi_p) -(U_2(\xi_p)-p(\xi_p)) = W(\xi_p) -v_2(p).
\] 
Hence, 
\[
U_1(\xi_p) + p(\xi_p) \leq W^{\max} -\mathrm{Avg}_2, \quad W^{\max}:=\max_{\xi \in K} W(\xi)
\]
So in any SPNE player 1's on-path payoff cannot exceed $W^{\max} -\mathrm{Avg}_2$. 

Define player 1's choice of price schedule such that for a fixed 
$\varepsilon>0$
\[
p^{\varepsilon}(\xi):=p^{\ast}(\xi) -\varepsilon \varphi(\xi) + \varepsilon \int_K \varphi d\mu_{K}, 
\]
where $\varphi$ is defined such that given a fixed $\iota \in (0,1)$
\[
\varphi(\xi):=\frac{\iota }{\iota + d_{w^{\ast}}(\xi,\xi^{\max})}, \ \xi^{\max} \in \arg\max W(\xi). 
\] Note that $\varphi(\xi^{\max})=1$, $\varphi<1$ otherwise. Write $\beta:=\int_K \varphi d\mu_{K}$. 
Then we see $\beta<1$. 
Since $\Lip(p^{\ast})=\Lip(U_2)<L$, $\Lip(\varphi) \leq 1/\iota$,  
$p^{\varepsilon} \in P_2$ for $\varepsilon \leq \iota(L-\Lip (p^{\ast}))$. 

Using $p^{\varepsilon}$, player 1 can get arbitrarily close to the upper bound as follows.
For any $\delta>0$, choose $\varepsilon$ with $\varepsilon(1-\beta) < \delta$. 
Note that for the price $p^{\varepsilon}$, player 2's payoff
\[
U_2(\xi)-p^{\varepsilon}(\xi) =\mathrm{Avg}_2+ \varepsilon \varphi(\xi) - \varepsilon \beta
\] is uniquely maximized at $\xi^{\max}$. So player 2 must choose $\xi^{\max}$. Consequently, player 1's payoff  in the case of choosing $p^{\varepsilon}$ is 
\[
U_1(\xi^{\max}) + p^{\varepsilon} (\xi^{\max}) = W^{\max} -\mathrm{Avg}_2-\varepsilon (1-\beta).
\] 

Now let $\sigma$ be a SPNE with on-path payoff $\Pi_{\sigma}$. 
For any $\delta>0$, choose $\varepsilon>0$ with $\varepsilon <\min\{\iota(L-\Lip(p^{\ast})), \delta/(1-\beta)\}$. Then deviating to $p^{\varepsilon}$ provides player 1 with 
\begin{align}
W^{\max} -\mathrm{Avg}_2-\varepsilon (1-\beta) > W^{\max} -\mathrm{Avg}_2-\delta \label{eq:deviation}
\end{align}
If $\Pi_{\sigma}<W^{\max} -\mathrm{Avg}_2$, choose $\delta=(W^{\max} -\mathrm{Avg}_2-\Pi_{\sigma})/2>0$. Then \eqref{eq:deviation} gives a deviation yielding payoff strictly greater than $\Pi_{\sigma}$. A contradiction. 
Hence, any SPNE must satisfy
\begin{align}
\Pi_{\sigma} = W^{\max} -\mathrm{Avg}_2. \label{eq:spne-player1-payoff}
\end{align}
Let $p=\sigma_1$ and $\xi=\sigma_2(p) \in \arg\max (U_2-p)$. Then
\[
\Pi_{\sigma}=U_1(\xi)+p(\xi)=W(\xi)-v_2(p).
\]
Since $W(\xi) \leq W^{\max}$ and $v_2(p) \geq \mathrm{Avg}_2$, \eqref{eq:spne-player1-payoff} forces these hold with equality:
\[
W(\xi) = W^{\max}, \ v_2(p)=\mathrm{Avg}_2.
\] 
Let $f:=U_2-p\in C(K)$. Then
\[
\max_{\xi\in K} f(\xi)=v_2(p)=\mathrm{Avg}_2
=\int_K (U_2-p)\,d\mu_{K}=\int_K f\,d\mu_{K}.
\]
Since $\mu_{K}$ has full support and $f$ is continuous, $\int_K f\,d\mu_{K}=\max_K f$
implies that $f$ is constant on $K$. Hence $U_2(\xi)-p(\xi)\equiv \mathrm{Avg}_2$ and therefore
$p=p^*$.
Consequently $\sigma_1=p^*$ and, since $W(\xi)=W^{\max}$ with $\xi=\sigma_2(p)$ and $p=p^*$,
we have $\sigma_2(p^*)\in\arg\max_{\xi\in K} W(\xi)$.

Step 3: 
The remaining task is now to show that  for every efficient outcome, there is a corresponding SPNE. 
Fix any efficient $\xi \in \arg\max W$. Define a strategy profile corresponding to $\xi$ such that 
\[
\sigma_1:=p^{\ast}, \quad \sigma_2(p^{\ast}):=\xi, \quad \sigma_2(p) \in \arg\max(U_2-p) \ \forall p.
\] 
By construction, $\sigma_2(p)$ is a best response after every $p \in P_2$. 

For player 1: for any deviation to $p$, the induced outcome is $\eta=\sigma_2(p)$, and the corresponding payoff is 
\[
W(\eta)-v_2(p) \leq W^{\max}-\mathrm{Avg}_2.
\] Under $(p^{\ast}, \xi)$ the player 1 gets $W^{\max}-\mathrm{Avg}_2$. Hence, the player 1 cannot gain. 
This completes the proof.  
\end{proof}

Viewing the two-player P\&C mechanism as a Stackelberg game, the
follower's best-response correspondence is nonempty, compact-valued,
and upper hemicontinuous under our compactness and continuity
assumptions. At prices with multiple best responses, however, it need
not be lower hemicontinuous. Moreover, because admissible price
schedules are general Lipschitz functions, the correspondence need
not be convex-valued. Thus, a conventional continuous-selection route,
such as an application of Michael's selection theorem, is unavailable
without additional assumptions
\citep{michael1956selected}. A graph-best response formulation, however, incorporates leader-favorable tie-breaking
and establishes only the existence of a suitably selected SPNE. It neither
provides a continuous follower reaction function nor rules out inefficient
SPNE under other selections. 

Under pessimistic tie-breaking, the leader's value function may be
discontinuous and its supremum need not be attained
\citep{von2010leadership}; see also
\citet{morgan2006stackelberg} for a regularization approach to
nonunique follower responses. The proof of
Theorem~\ref{thm:2-player} instead uses the equalizing schedule and
arbitrarily small Lipschitz perturbations to discipline equilibrium
selection and characterize SPNE outcomes directly, without imposing
an exogenous tie-breaking convention.

\subsection{Price-and-choose risk sharing with many players}
We now extend the two-player P\&C mechanism inductively to $n$ players. 
Following \cite{echenique2025price}, we call this mechanism the P$^{n-1}$ \& C mechanism, where, similarly to the two-player case, player 1 sets a price $p^2$ in the set $P_n$.   
Each player $i$, $i=2,\cdots, n-1$ sequentially sets a price $p^{i+1}$ in $P_{n}$ knowing the price $p^2,\cdots, p^{i}$ set before, and the last player $n$ decides a choice of risk sharing $\xi \in K$. 
This process leads to the following payoffs. Let $p$ denote the vector of prices, $p=(p^2,\cdots, p^{n})$. Then,
\begin{align*}
g_1(p,\xi) &= U_1(\xi) +p^2(\xi)\\
g_m(p,\xi) &=U_m (\xi) -p^{m}(\xi)+p^{m+1}(\xi) \quad \text{ for }m=2,\cdots, n-1 \\
g_n(p,\xi) &=U_n(\xi) -p^n(\xi).
\end{align*}
Set $p^{n+1}(\xi)\equiv p^1(\xi)\equiv 0$ on $K$. 
For each $i=1, \cdots, n$, define
\[
W_i(\xi):=\sum_{j=i}^n U_j(\xi), \quad \overline{W}_i:=\int_K W_i(\xi) d\mu_{K}(\xi). 
\]

\begin{theorem}\label{thm:many-player}
For every subgame beginning at stage $i$, the following statements hold. 
 \begin{itemize}
 \item[(A)] The terminal allocation satisfies 
\[
\sigma_n(p) \in \arg\max_{\xi \in K} \{W_i(\xi)-p^i(\xi)\}
\]
 for $i=1,\cdots, n$ in every SPNE; 
 \item[(B)] the equilibrium posted price at stage $i+1$ equalizes continuation welfare: 
\[
p^{i+1}(\xi)=W_{i+1}(\xi)-\overline{W}_{i+1} \quad \forall \xi \in K
\]
holds for $i=1,\cdots, n-1$ in every SPNE; 
\item[(C)] for every
\[
\xi^\ast\in
\arg\max_{\xi\in K}
\{W_i(\xi)-p^i(\xi)\},
\]
there exists an SPNE of the subgame starting at stage $i$
whose terminal allocation is $\xi^\ast$.
 \end{itemize}
Consequently, the P$^{n-1}$\&C mechanism fully implements
\[
\arg\max_{\xi\in K}
\sum_{i=1}^nU_i(\xi_i).
\]
\end{theorem}

\begin{proof}
We first show the case of $i=n-1$. This is a two-player risk sharing between $n-1$ and $n$, with follower utility $U_n$, and by Theorem \ref{thm:2-player}, (A) and (B) hold.

We now proceed by induction. Assume that (A) and (B) hold for any subgame starting from $i+1$ or larger. Then for any given $p^{i+1}$, the continuation outcome 
$\sigma_n(p)$ maximizes $W_{i+1}(\xi)-p^{i+1}(\xi)$ over $\xi \in K$. Therefore, from player $i$'s viewpoint, stage $i$ reduces to a 2-player P\&C problem with leader's payoff $U_i(\xi)-p^i(\xi)+p^{i+1}(\xi)$ and follower's payoff $W_{i+1}(\xi)-p^{i+1}(\xi)$, where the leader posts $p^{i+1}$ and the follower chooses $\xi$. 

Applying the proof of Theorem \ref{thm:2-player} to this reduced game with follower utility $W_{i+1}$, and the price cap $(n-1)L$, 
\[\Lip(W_{i+1}) \leq \sum_{j=i+1}^n L_j < (n-1)L, 
\]
the unique equalizing price is given by
\[
p^{i+1}(\xi) = W_{i+1}(\xi)-\overline{W}_{i+1} \quad \forall \xi \in K,
\]
and the implemented allocation maximizes the sum of leader and follower payoffs: 
\[
(U_i-p^i+p^{i+1})+(W_{i+1}-p^{i+1})=W_i-p^i.
\]
Hence, (A) and (B) hold for subgames starting from $i$. This completes the induction, and taking $i=1$ with $p^1 \equiv 0$ gives efficiency.  

We proceed to show (C). For the case $i=n-1$, statement (C) follows from Theorem \ref{thm:2-player}. 
Suppose that the statement (C) holds for every subgame beginning at stage $i+1$.
For each $q\in P_n$, let $\mathcal O_{i+1}(q)$ 
denote the set of terminal allocations induced by continuation
SPNEs beginning at stage $i+1$.
By (A) and the induction hypothesis (C),
\[
\mathcal O_{i+1}(q)
=
\arg\max_{\xi\in K}
\{W_{i+1}(\xi)-q(\xi)\}.
\]
Hence, the subgame beginning at stage $i$ is equivalent to the
two-player P\&C game in which the reduced follower's response set
to $q$ is $\mathcal O_{i+1}(q)$.
Set 
\[
p^{i+1,\ast}(\xi)=W_{i+1}(\xi)-\overline{W}_{i+1}, \quad \forall \xi \in K.
\]
But 
\[
\mathcal O_{i+1}(p^{i+1, \ast})=K, 
\] since $W_{i+1}(\xi)-p^{i+1,\ast}(\xi)$ is constant. This gives 
existence of continuation SPNE starting from stage $i+1$ with the terminal allocation $\xi^*$. 

By the full-implementation argument of Theorem~\ref{thm:2-player} applied to the
reduced game, player \(i\) does not profitably deviate from
\(p^{i+1,\ast}\).
This completes the proof. 
\end{proof}

\begin{example}[P\&C for risk sharing of a residual catastrophic layer] ~ 

We go back to risk sharing in the closed-form benchmark in Example \ref{ex:catastrophic-layer}. 
P\&C provides a decentralized way to allocate the residual
catastrophic exposure that remains after market placement.
The syndicate knows the proportional allocation
\[
\xi_i^*=w_iX,\qquad i=1,2,3
\] is optimal. 
Let \(K=\Delta_{X}\), and define
\[
\mathrm{Avg}_i:=\int_K U_i(\eta_i)\,d\mu_{K}(\eta),
\qquad i=2,3,
\]
and
\[
W_2(\xi):=U_2(\xi_2)+U_3(\xi_3),
\qquad
\overline W_2:=\int_K W_2(\eta)\,d\mu_{K}(\eta)
=\mathrm{Avg}_2+\mathrm{Avg}_3.
\]
In the three-player P\&C mechanism, player 1 posts a price schedule \(p^2\), player
2 observes \(p^2\) and posts a price schedule \(p^3\), and player 3 finally chooses
\(\xi\in K\). The equalizing schedules are
\[
p^3(\xi)=U_3(\xi_3)-\mathrm{Avg}_3,
\qquad
p^2(\xi)=W_2(\xi)-\overline W_2.
\]

Under \((p^2,p^3)\), if \(\xi\in K\) is chosen, the players' payoffs are
\begin{align*}
U_1(\xi_1)+p^2(\xi)
&=\sum_{i=1}^3 U_i(\xi_i)-\mathrm{Avg}_2-\mathrm{Avg}_3,\\
U_2(\xi_2)-p^2(\xi)+p^3(\xi)
&=\mathrm{Avg}_2,\\
U_3(\xi_3)-p^3(\xi)
&=\mathrm{Avg}_3.
\end{align*}
Thus \(p^3\) makes player 3 indifferent over the feasible menu. 
By Theorem~\ref{thm:many-player}, every SPNE terminal allocation under the equalizing
schedules maximizes aggregate utility. Example~\ref{ex:catastrophic-layer} identifies
\[
\xi^\ast=(w_1X,w_2X,w_3X)
\]
as one such maximizer, and the full-implementation part of
Theorem~\ref{thm:many-player} guarantees an SPNE whose terminal allocation is $\xi^\ast$.
Hence the displayed price schedules support the optimal factor-sensitive risk-sharing allocation
in SPNE.
\end{example}

\section{Multiple prior extension}\label{sec:credal-set}
When players have diverse probability assessments of the uncertain events affecting the payoff, does P\&C still implement an efficient allocation? 
We extend the analysis to this scenario by allowing each player $i$ to evaluate allocations under a credal set 
$\mathcal{P}_i$ of priors over $(\Omega, \mathcal{F})$. We allow priors to be finitely additive probability charges to work with weak$^\ast$-compact credal sets and guarantee attainment of
worst-case evaluations. We additionally require every prior to be dominated
by the reference probability $\mathbb P$. This domination condition has a
separate role: it ensures that prior-specific expectations and certainty
equivalents are well defined on $L^\infty(\mathbb P)$ and that all players
evaluate the same $\mathbb P$-a.s.-defined feasible menu. 
If one restricts attention to
countably additive priors, compactness typically requires additional structure (e.g. a Polish state space and tightness/closedness conditions, via Prokhorov-type
arguments) and/or restrictions on the class of test functions; we avoid these extra assumptions by working in the space of bounded finitely additive signed measures.

We employ the max-min utility because it is a prominent class that fits these assumptions and is economically relevant as a utility in ambiguity in financial transactions. 

\subsection{Credal sets}
Let $B_b(\Omega, \mathcal{F})$ denote the Banach space of all bounded $\mathcal{F}$-measurable functions on $\Omega$,
equipped with the sup norm $\|\cdot\|_{\infty}$. 
Its (topological) dual is 
$ba(\Omega,\mathcal{F})$, the space of bounded finitely additive signed measures on $\mathcal{F}$ \citep[][\,ch.4.7]{rao1983theory}.\footnote{
In chapter 4.7 of \cite{rao1983theory}, $\mathcal{C}(\Omega, \mathcal{F})^*=  ba(\Omega,\mathcal{F})$, 
where 
$\mathcal C(\Omega,\mathcal F)$ denotes the Banach space
of bounded $\mathcal{F}$-continuous functions (i.e., functions whose oscillation can be made arbitrarily
small on some finite $\mathcal{F}$-partition of $\Omega$), endowed with the sup norm. 
However, since $\mathcal{F}$ is a $\sigma$-algebra, every bounded $\mathcal{F}$-measurable function can be uniformly approximated by $\mathcal{F}$-simple functions; hence $B_b(\Omega, \mathcal{F}) =  \mathcal{C}(\Omega, \mathcal{F})$. 
}
We equip $ba(\Omega, \mathcal{F})$ with the weak$^{\ast}$ topology $\sigma(ba, B_b)$. 
Under $\sigma(ba, B_b)$, $\mu^n \to \mu$ iff $\int f d\mu^n \to \int f d\mu$ for every bounded $\mathcal{F}$-measurable payoff $f$.

For $i \in N$, fix a positive real number $\ell_{i0}>0$ and define a nonempty credal set  
\[
\mathcal{P}_i :=\Big\{
\mu \in ba(\Omega, \mathcal{F}) \, \Big\vert \, \mu \geq 0,\ \mu(\Omega)=1, \, \mu \ll \mathbb{P}, \, \sup_{\xi\neq\eta\in K}\frac{\big|\langle \xi_i-\eta_i,\mu\rangle\big|}{d_{w^{\ast}}(\xi,\eta)}\le \ell_{i0}\Big\},
\]
where $\mu \ll \mathbb{P}$ is $\cap_{A \in \mathcal{F}:\mathbb{P}(A)=0} \{\mu \, \vert \, \mu(A)=0\}$.
\begin{remark}
One convenient sufficient condition for $\mathcal{P}_i \ne \emptyset$ is $\ell_{i0} \geq c_i$. 
Choose the metric $d_{w^{\ast}}$ via a countable dense set $\{h_m\}\subset (L^1)^n$ that includes $e_i\otimes \mathbf{1}$. Choose $m_i$ so that $h_{m_i}=e_i\otimes 1$, and set
$c_i:=2^{m_i}$. 
Then for all $\xi,\eta\in K$,
\[
\left | \langle \xi_i-\eta_i, \mathbb{P} \rangle_{(B_b, ba)} \right | \;=\; |\langle \xi-\eta,\ e_i\otimes \mathbf 1\rangle_{(L^{\infty})^n, (L^1)^n)}|
\;\leq\; c_i\, d_{w^{\ast}}(\xi,\eta),
\]
so $\mathbb{P}\in\mathcal P_i$ whenever $\ell_{i0}\ge c_i$. 

By domination, $\sum_j \xi_j=X$ $\mathbb P$-a.s.\ implies $\sum_j \xi_j=X$ $\mu$-a.s.\ for any $\mu\in\mathcal P_i$, so all players share the same feasible set $K$.
\end{remark}
Each $\mu\in ba(\Omega,\mathcal F)$ induces a continuous linear functional on
$B_b(\Omega,\mathcal F)$ via $f\mapsto \int f\,d\mu$.
If moreover $\mu\ll\mathbb P$ (i.e. $\mu(A)=0$ for all $A\in\mathcal F$ with $\mathbb P(A)=0$),
then $\xi\mapsto \int \xi\,d\mu$ depends only on the $L^\infty(\mathbb P)$-equivalence class of $\xi$,
and therefore defines a well-defined expectation operator on $L^\infty(\mathbb P)$.

\begin{lemma}\label{lem:credal-set-compactness}
Each credal set $\mathcal{P}_i$ is weak$^{\ast}$-compact.
\end{lemma}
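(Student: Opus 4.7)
The plan is to realise $\mathcal{P}_i$ as a weak$^{\ast}$-closed subset of the closed unit ball $B^{\ast}\subset ba(\Omega,\mathcal{F})$ and then invoke the Banach--Alaoglu theorem for the dual pair $(B_b,ba)$. Every $\mu\in\mathcal{P}_i$ is a nonnegative charge with $\mu(\Omega)=1$, hence has total-variation norm $|\mu|(\Omega)=\mu(\Omega)=1$, so $\mathcal{P}_i\subset B^{\ast}$; and $B^{\ast}$ is $\sigma(ba,B_b)$-compact by Banach--Alaoglu. The task then reduces to verifying that each of the four defining conditions on $\mathcal{P}_i$ cuts out a weak$^{\ast}$-closed subset of $ba$.

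The first three conditions are handled uniformly by observing that for every $A\in\mathcal{F}$ the indicator $\mathbf{1}_A$ belongs to $B_b$, so $\mu\mapsto\mu(A)=\langle\mathbf{1}_A,\mu\rangle$ is $\sigma(ba,B_b)$-continuous. Hence the positivity cone $\bigcap_{A\in\mathcal{F}}\{\mu:\mu(A)\geq 0\}$, the normalization hyperplane $\{\mu:\mu(\Omega)=1\}$, and the domination set $\{\mu\ll\mathbb{P}\}=\bigcap_{A:\,\mathbb{P}(A)=0}\{\mu:\mu(A)=0\}$ are each weak$^{\ast}$-closed as intersections of continuous preimages. For the Lipschitz constraint, for each ordered pair $\xi\neq\eta$ in $K$ I would fix bounded measurable representatives $\hat{\xi}_i,\hat{\eta}_i\in B_b$ of the $L^{\infty}$-classes $\xi_i,\eta_i$; the functional $\mu\mapsto\langle\hat{\xi}_i-\hat{\eta}_i,\mu\rangle$ is then weak$^{\ast}$-continuous by the very definition of $\sigma(ba,B_b)$, so $T_{\xi,\eta}:=\{\mu\in ba:\ |\langle\hat{\xi}_i-\hat{\eta}_i,\mu\rangle|\leq L_i\,d_{w^{\ast}}(\xi,\eta)\}$ is weak$^{\ast}$-closed.

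The main point to be careful about is that the Lipschitz bound in the definition of $\mathcal{P}_i$ uses the $L^{\infty}(\mathbb{P})$ pairing $\langle\xi_i-\eta_i,\mu\rangle$, which is a priori representative-dependent for a general charge in $ba$. I would sidestep this by ordering the intersections so that $\{\mu\ll\mathbb{P}\}$ is imposed first: on this closed set any two representatives of $\xi_i-\eta_i$ differ only on a $\mathbb{P}$-null and hence $\mu$-null set, so $\langle\hat{\xi}_i-\hat{\eta}_i,\mu\rangle$ agrees with the intended $L^{\infty}(\mathbb{P})$ pairing, and $T_{\xi,\eta}\cap\{\mu\ll\mathbb{P}\}$ is independent of the choice of representatives. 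Intersecting $B^{\ast}$ with the three closed sets above and with $\bigcap_{\xi\neq\eta}T_{\xi,\eta}$ then yields exactly $\mathcal{P}_i$, which is therefore weak$^{\ast}$-closed and, as a closed subset of the compact ball $B^{\ast}$, weak$^{\ast}$-compact.
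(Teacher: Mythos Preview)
Your proposal is correct and follows essentially the same route as the paper: contain $\mathcal{P}_i$ in the unit ball of $ba$, apply Banach--Alaoglu, and verify that each defining constraint is weak$^{\ast}$-closed using continuity of $\mu\mapsto\langle f,\mu\rangle$ for $f\in B_b$. Your intersection $\bigcap_{\xi\neq\eta}T_{\xi,\eta}$ is exactly the sublevel set $\{\phi\le L_i\}$ of the lower-semicontinuous supremum the paper uses, and you are in fact more careful than the paper in explicitly addressing representative-dependence of the pairing by restricting to $\{\mu\ll\mathbb{P}\}$ first.
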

\noindent
\emph{Proof: see Appendix \ref{sec:appE}}.

\subsection{Max-min factor-sensitive monetary utilities}
Since $\mathcal{P}_i$ is shown to be weak$^{\ast}$-compact in Lemma \ref{lem:credal-set-compactness}, $\nu \mapsto \E_\nu[B]$, $B \in B_b$ is weak$^*$-continuous, and $\nu \ll \mathbb{P}$, we define the max-min monetary utility such that
\[
U_i(\xi) =\inf_{\nu \in \mathcal{P}_i} V_{i, \nu}(\xi_i)= \min_{\nu \in \mathcal{P}_i} V_{i, \nu}(\xi_i) \quad \forall i \in N, \ \forall \xi \in (L^{\infty})^n, 
\]
\[
V_{i, \nu}(\xi_i) =  \E_{\nu}[\xi_i]+\sum_{m=1}^{\overline{m}} \psi_{im}(\Cov_{\nu}(B_m, \xi_i)), \quad \psi_{im}(a)=-\frac{\beta_{im}}{\gamma_{im}} \log \cosh(\gamma_{im} a), \, a \in \mathbb{R},
\] where each $B_m \in L^{\infty}$ is a background risk, $\beta_{im}>0$, $\gamma_{im}>0$, $\overline{m} \in \mathbb{N}$, and 
\[
\Cov_{\nu}(B_m, \xi_i):=\E_\nu[B_m\xi_i]-\E_\nu[B_m]\E_\nu[\xi_i]. 
\]
\begin{remark}[The minimum of $\{V_{i,\nu}(\xi_i)\}_{\nu \in \mathcal{P}_i}$] ~
For fixed \(\xi_i\), the map
\[
\nu\longmapsto V_{i,\nu}(\xi_i)
\]
is weak$^\ast$ continuous, since it is a continuous function of
\[
\langle \xi_i,\nu\rangle,\quad
\langle B_m,\nu\rangle,\quad
\langle B_m\xi_i,\nu\rangle.
\]
Hence, weak$^\ast$ compactness of \(\mathcal P_i\) implies that the
minimum is attained.
\end{remark}

\begin{assumption}[Requirement for monotonicity]~ \label{ass:monotonicity}
For every $i \in N$, 
\[
\sup_{\nu \in \mathcal{P}_i} \sum_{m=1}^{\overline{m}} \beta_{im} \|B_m-\E_\nu[B_m]\|_\infty \leq 1.
\]
\end{assumption}
\begin{remark}
If each $B_m=\mathrm{1}_{A_m}$, Assumption \ref{ass:monotonicity} could be replaced by $\sum_m \beta_{im} \leq 1$. 
\end{remark}
\begin{assumption}[Requirement for $d_{w^*}$-Lipschitz continuity] ~\label{ass:dw*-Lipschitz-continuity}
For every $i \in N$, every $m=1,\cdots, \bar m$, and all $\xi, \eta \in K$ 
\begin{align*}
\max_{\nu \in \mathcal{P}_i} |\E_\nu[\xi_i-\eta_i] |&\leq \ell_{i0} d_{w^*}(\xi,\eta),\\
\max_{\nu \in \mathcal{P}_i} |\Cov_\nu(B_m,\xi_i-\eta_i)| &\leq \ell_{im} d_{w^*}(\xi,\eta),
\end{align*}
and 
\[
\ell_{i0}
+
\sum_{m=1}^{\overline{m}}\beta_{im}\ell_{im}<L.
\]
\end{assumption}

\begin{lemma}[$L$-Lipschitz continuity] ~ \label{lem:L-Lipschitz}

Under Assumptions~\ref{ass:monotonicity} and~\ref{ass:dw*-Lipschitz-continuity}, the minimum defining $U_i$ is attained,
and $U_i$ is normalized, concave, cash invariant, and monotone with respect to the order of $L^{\infty}(\mathbb{P})$.
Moreover,  
\[
\Lip_{d_{w^\ast}}(U_i)
\le
\ell_{i0}
+
\sum_{m=1}^{\overline{m}}\beta_{im}\ell_{im}<L.
\]
\end{lemma}
We set  
\[
L_i := \ell_{i0}
+
\sum_{m=1}^{\overline{m}}\beta_{im}\ell_{im}<L \quad \forall i \in N.
\]
\begin{proof}
Each $U_i$ is normalized, as $\psi_{im}(0)=0$ for all $m$. 
Concavity of each $U_i$ follows from that for fixed $\nu$, $V_{i,\nu}$ is concave due to the concavity of $\psi_{im}$ for any $m$, and $\min_{\nu \in \mathcal{P}_i} V_{i,\nu}$ is concave due to the pointwise infimum of concave functions.
Cash invariance of each $U_i$ follows from the fact that covariance does not change by adding a constant.  

Since $\nu \ll \mathbb{P}$ for $\nu \in \mathcal{P}_i$, $i \in N$, 
if $D:=\xi_i-\eta_i \geq 0$ a.e. $\mathbb{P}$, then 
$D:=\xi_i-\eta_i \geq 0$ a.e. $\nu$
for $\nu \in \mathcal{P}_i$, $i \in N$. 
For fixed $\nu \in \mathcal{P}_i$, 
\[
|\Cov_{\nu}(B_m, D)| = |\E_\nu[(B_m-\E_\nu[B_m])D]| \leq \|B_m-\E_\nu[B_m]\|_\infty \E_\nu[D]. 
\]
From $\beta_{im}$-Lipschitz continuity of $\psi_{im}$ and bilinearity of covariance, for a fixed $\nu \in \mathcal{P}_i$ we have 
\begin{align*}
V_{i,\nu}(\xi_i)&-V_{i,\nu}(\eta_i) \\
&=\E_\nu[D] + \sum_{m=1}^{\overline{m}} 
\left[ \psi_{im}(\Cov_\nu(B_m,\xi_i)) - \psi_{im}(\Cov_\nu(B_m,\eta_i))
\right] \\
&\geq \E_\nu[D] -\sum_{m=1}^{\overline{m}} \beta_{im}|\Cov_\nu(B_m,D)| \\
&\geq 
\left( 
1-\sum_{m=1}^{\overline{m}} \beta_{im} \|B_m-\E_\nu[B_m]\|_\infty
\right)\E_\nu[D].
\end{align*}
From Assumption \ref{ass:monotonicity}, we see 
\[
V_{i,\nu}(\xi_i) \geq V_{i,\nu}(\eta_i) \quad \forall \nu \in \mathcal{P}_i.
\]
Thus,
\begin{align*}
U_i(\xi_i) = \inf_{\nu \in \mathcal{P}_i}V_{i,\nu}(\xi_i) \geq \inf_{\nu \in \mathcal{P}_i}V_{i,\nu}(\eta_i) = U_i(\eta_i). 
\end{align*}
Hence, each $U_i$ is monotone with respect to the order of $L^{\infty}(\mathbb{P})$. 

From Assumption \ref{ass:dw*-Lipschitz-continuity}, for any $\xi, \eta \in K$ we have 
\[
\sup_{\nu \in \mathcal{P}_i}|V_{i,\nu}(\xi_i) -V_{i,\nu}(\eta_i)| \leq \left( 
\ell_{i,0} + \sum_{m=1}^{\overline{m}} \beta_{i,m} \ell_{i,m}
\right) d_{w^*}(\xi,\eta).
\]
Since 
\[
|U_i(\xi_i) -U_i(\eta_i)| \leq \max \left( 
|V_{i,\nu^\xi}(\xi_i)-V_{i,\nu^\xi}(\eta_i)|, |V_{i,\nu^\eta}(\xi_i)-V_{i,\nu^\eta}
(\eta_i)|\right), 
\] where $\nu^{\xi} \in \arg\min_\nu V_{i,\nu}(\xi_i)$ and $\nu^\eta \in \arg\min_\nu V_{i,\nu}(\eta_i)$, we have 
\[
\Lip_{d_{w^\ast}}(U_i)
\le
\ell_{i0}
+
\sum_{m=1}^{\overline{m}}\beta_{im}\ell_{im}<L.
\]
\end{proof}

\subsection{Implementation under ambiguity}
Once again, we assume that players commit to using the P\&C mechanism and the choice set $K$. Players evaluate each allocation in $K$ using their worst-case prior, but the procedure remains intact, as described in Section \ref{sec:P&C}. 
Each player $i$ sets a price $p^{i+1} \in P_n$, $i=1,\dots,n-1$ 
after observing the prices posted previously, 
and the last player $n$ decides a choice of risk sharing $\xi \in K$. Also, define $p^1$ and $p^{n+1}$ such that 
$p^1\equiv 0$ and $p^{n+1} \equiv 0$.  Let $p$ denote the vector of prices, $p=(p^1,\cdots, p^{n+1})$. Then
\[
g_k(p,\xi) = U_k(\xi) -p^k(\xi) +p^{k+1}(\xi), \quad k=1,\cdots, n.
\]
\begin{theorem}[Implementation under multiple priors]~ \label{thm:max-min}

Under Assumptions \ref{ass:monotonicity} and \ref{ass:dw*-Lipschitz-continuity} and common knowledge of all payoff-relevant primitives, 
\begin{enumerate}
\item every SPNE terminal allocation belongs to
\[
\arg\max_{\xi\in K}\sum_{i=1}^n U_i(\xi_i);
\]
\item every allocation in this argmax set is the terminal outcome of some SPNE.
\end{enumerate}
\end{theorem}
\begin{proof}
$U_i$ and thus $W_i$ are continuous on $(K,d_{w^{\ast}})$. 
So the follower's best responses are attained. The equalizing-price construction, 
\[
p^{i+1}(\xi)=W_{i+1}(\xi) - \overline{W}_{i+1} \quad \forall \xi \in K, 
\]
 still lies in $P_n$, because from Lemma \ref{lem:L-Lipschitz}
\[\Lip(W_{i+1}) \leq \sum_{j=i+1}^n L_j < (n-1)L
\] and 
\[
\int_K p^{i+1} d\mu_{K}=\int_K (W_{i+1}-\overline W_{i+1})d\mu_{K}=0.
\]
The backward-induction argument goes through. This is identical to Theorem \ref{thm:many-player}. 
\end{proof}
Computing the equalizing schedule $p^{i+1}$ only requires player $i$ to know the continuation welfare functional $W_{i+1}$. 

The implemented allocation is independent of the normalization measure
$\mu_K$, whereas equilibrium payoff levels and the surplus benchmark used
in the first-mover auction depend on $\mu_K$.
Thus, $\mu_K$ is a transparent design parameter of the mechanism.

\section{Equal sharing of the P\&C surplus via a first-mover auction}\label{sec:bidding}
We now augment P\&C 
with full information about players' credal sets. 
The P$^{n-1}$\&C mechanism inherently favours the first mover, player 1; the payoffs of non-first movers collapse to 
\[
\underline{\Avg}_i:=
\int_K
U_i(\eta_i)
d\mu_{K}(\eta),
\]while the first mover receives the entire surplus. 
We adapt the equal-rebate ``first-mover auction'' of \cite{echenique2025price} to our setting with ambiguous priors and 
full information about players' credal sets (each player $i$ knows $\mathcal{P}_j$ for all $j$). 
Each player $i$ submits a money bid $b_i \geq 0$. Let 
\[I^*=\{i \ \vert b_i =\max\{b_1,\cdots, b_n\}\}\]
be the set of winners. One winner is drawn uniformly (coin toss, etc.) from  $I^*$; the winner pays $b_i$ and each non-winner receives $b_i/(n-1)$. Transfers occur before the P$^{n-1}$\&C subgame.  
Then P\&C starts with the winner acting as player 1. 
Utilities are monetary (cash-invariant), so bids are pure transfers. 
Hence, the auction stage only redistributes the same surplus and does not affect efficiency. 
We study subgame-perfect equilibria of the two-stage game (auction followed by P$^{n-1}$\&C) under complete information about all payoff-relevant primitives, where bids constitute a Nash equilibrium (NE) of the auction subgame.

Let 
\[
\xi^{\ast} \in \arg\max_{\xi \in K} \sum_i U_i(\xi),\, \eta:=\sum_i U_i(\xi^{\ast}) -\sum_i \underline{\Avg}_i \geq 0,
\] 
be the efficient surplus. 

\begin{lemma}\label{lem:BNE}
The symmetric bid profile
\[
b_i=b^\ast:=\frac{n-1}{n}\eta,
\qquad i\in N,
\]
is a Nash equilibrium of the bidding stage.
\end{lemma}
\begin{proof}
If player $i$ wins, $i$'s payoff is $\underline{\mathrm{Avg}}_i+\eta -b_i$, while any loser $j \ne i$ gets $\underline{\Avg}_j + b_i/(n-1)$. 
At the symmetric bid profile \(b_i=b^\ast\) for all \(i\), each
player receives the same surplus increment whether selected as the
winner or not, since
\[
\eta-b^\ast=\frac{b^\ast}{n-1}.
\]
A unilateral bid above \(b^\ast\) makes the deviator the unique winner
but strictly reduces \(\eta-b_i\), whereas a bid below \(b^\ast\)
makes the deviator a loser and leaves the increment
\(b^\ast/(n-1)\) unchanged. Hence no unilateral deviation is profitable, and the symmetric bid profile yields the NE $b^{\ast}=(n-1)/n \eta$.
\end{proof}
\begin{proposition}\label{prop:first-mover-auction}
Combined with a P$^{n-1}$\&C SPNE implementing $\xi^\ast$, the bidding in Lemma \ref{lem:BNE} implements an efficient allocation $\sigma_n(p^{\ast})$ and yields expected payoffs
\[
\Pi_i^{\text{auction}}
=\underline{\Avg}_i +\frac{\eta}{n}, \quad \forall i \in N.
\] 
\end{proposition}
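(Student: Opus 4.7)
The plan is to stitch together three pieces: (i) the efficiency and equalizing-price structure of the P$^{n-1}$\&C subgame under ambiguity (Theorem \ref{thm:max-min}); (ii) the resulting continuation payoffs at the induced SPNE, conditional on the identity of the first mover; and (iii) the auction-stage NE computed in Lemma \ref{lem:BNE}. Cash invariance of every $U_i$ is crucial: deterministic auction transfers occur before the P$^{n-1}$\&C subgame and therefore enter each player's total payoff additively, without disturbing the continuation equilibrium or the efficient allocation implemented there.

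First I would fix an arbitrary winner $k$ of the auction and compute the continuation payoffs of the P$^{n-1}$\&C subgame in which $k$ acts as player~1. By Theorem \ref{thm:max-min}, the realized allocation $\sigma_n(p^{\ast})$ is efficient, and for each stage $i=1,\dots,n-1$ the SPNE price schedule is $p^{i+1}(\xi)=W_{i+1}(\xi)-\overline{W}_{i+1}$ with $p^1\equiv p^{n+1}\equiv 0$. Plugging into $g_i(p,\xi)=U_i(\xi)-p^i(\xi)+p^{i+1}(\xi)$ and telescoping gives, for every non-first-mover $j\neq k$, a payoff of exactly $\overline{W}_j-\overline{W}_{j+1}=\underline{\mathrm{Avg}}_j$, while the first mover $k$ collects the residual
\[
U_k(\xi^{\ast})+p^2(\xi^{\ast})
=W_1(\xi^{\ast})-\overline{W}_2
=\sum_{i\in N}U_i(\xi^{\ast})-\sum_{j\neq k}\underline{\mathrm{Avg}}_j
=\underline{\mathrm{Avg}}_k+\eta.
\]
Thus, prior to the auction transfers, winning the first-mover slot is worth exactly the surplus $\eta$ relative to being any non-first-mover.

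Next I would add the auction transfers from Lemma \ref{lem:BNE}. In the symmetric NE all players bid $b^{\ast}=(n-1)\eta/n$; the realized winner $k$ pays $b^{\ast}$ and every loser receives $b^{\ast}/(n-1)=\eta/n$. Combining with the continuation payoffs above, the winner's total payoff is
\[
\underline{\mathrm{Avg}}_k+\eta-b^{\ast}
=\underline{\mathrm{Avg}}_k+\eta-\tfrac{n-1}{n}\eta
=\underline{\mathrm{Avg}}_k+\tfrac{\eta}{n},
\]
while each loser $j\neq k$ receives $\underline{\mathrm{Avg}}_j+\eta/n$. The formula is independent of which bidder is selected by the uniform tie-break, so each player $i\in N$ receives $\underline{\mathrm{Avg}}_i+\eta/n$ with probability one. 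Efficiency of the overall outcome follows directly from efficiency of $\sigma_n(p^{\ast})$.

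The main obstacle I anticipate is purely bookkeeping: verifying the telescoping identity $\sum_{i=1}^{n}g_i=\sum_i U_i(\xi^{\ast})$ and the individual continuation payoff $\underline{\mathrm{Avg}}_j$ for every non-first-mover, in a manner that is manifestly invariant under relabeling (so that the coin toss among tied bidders is irrelevant). Everything else reduces to substitution of $b^{\ast}$ and cash invariance; no new ambiguity-theoretic argument beyond Theorem \ref{thm:max-min} is needed.
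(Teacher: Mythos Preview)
Your proposal is correct and follows the same approach as the paper: compute the P$^{n-1}$\&C continuation payoffs (first mover gets $\underline{\Avg}_k+\eta$, each non-first-mover $j$ gets $\underline{\Avg}_j$), then add the auction transfers $b^{\ast}=(n-1)\eta/n$ from Lemma~\ref{lem:BNE}. The paper's own proof is a one-line reference to Lemma~\ref{lem:BNE}, so your version simply makes explicit the telescoping computation that the paper leaves to the reader.
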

\begin{proof}
From Lemma \ref{lem:BNE}, the symmetric bidding equilibrium is $b^{\ast}=(n-1)\eta /n$, which leads to the stated payoffs. 
\end{proof}

\section{Conclusion}\label{sec:conclusion}

This paper has examined whether a decentralized group of participants can
implement a welfare-maximizing risk-sharing allocation.
We show that the P\&C mechanism introduced by
\citet{echenique2025price}, originally developed for a finite choice set,
extends to a weak$^\ast$-compact infinite menu of risk-sharing allocations
and continues to implement an allocation that maximizes aggregate monetary
utility over the admissible menu.
The result remains valid when participants hold heterogeneous credal sets
and evaluate allocations according to max--min monetary utilities.

The basic P\&C mechanism confers a first-mover advantage.
We therefore adapt the first-mover auction of
\citet{echenique2025price} to our infinite-menu, multiple-prior environment.
Under complete information about all payoff-relevant primitives, the combined
procedure preserves the welfare-maximizing allocation and divides the
resulting ex-ante surplus equally among participants.
Thus, once participants commit to the mechanism and its transfer rules,
they can achieve decentralized, welfare-maximizing, and equitable risk
sharing without relying on a trusted third party to select the allocation.

\begin{appendices}
\counterwithin{theorem}{section}
\counterwithin{lemma}{section}
\counterwithin{proposition}{section}
\numberwithin{equation}{section}

\section{Compactness/metrizability of $K$}\label{sec:appA}
\begin{proof}[Proof of Lemma \ref{lem:Delta-compact}] ~

Let $E:=(L^\infty)^n$ endowed with the weak$^\ast$ topology
$\sigma(E,(L^1)^n)$.

\medskip
\noindent\emph{Step 1 (boundedness).}
Fix $\xi\in \Delta_{X}$. By definition, $\sum_{i=1}^n \xi_i=X$ a.s.
and each $\xi_i$ shares the sign of $X$.
In particular, on $\{X>0\}$ we have $\xi_i\ge 0$ and hence
$0\le \xi_i \le \sum_{j=1}^n \xi_j = X$.
Similarly, on $\{X<0\}$ we have $\xi_i\le 0$ and hence
$X=\sum_{j=1}^n \xi_j \le \xi_i \le 0$.
On $\{X=0\}$ the sign restriction implies $\xi_i=0$.
Therefore $|\xi_i|\le |X|$ a.s., so $\|\xi_i\|_\infty \le \|X\|_\infty$ for all $i$.
With the product norm $\|\xi\|:=\max_i \|\xi_i\|_\infty$, we obtain
$\Delta_{X}\subset r\,B_E$ where $r:=\|X\|_\infty$.
By Banach--Alaoglu--Bourbaki, $r\,B_E$ is weak$^\ast$-compact.

\medskip
\noindent\emph{Step 2 (closedness).}
Define the linear map
\[
F:E\to L^\infty,\qquad F(\xi):=\sum_{i=1}^n \xi_i.
\]
To see that $F$ is weak$^\ast$-continuous, fix $h\in L^1$ and note that
\[
\langle F(\xi),h\rangle = \sum_{i=1}^n \langle \xi_i,h\rangle
= \big\langle \xi,\,(h,\dots,h)\big\rangle,
\]
and $\xi\mapsto \langle \xi,(h,\dots,h)\rangle$ is continuous by definition of the weak$^\ast$ topology
$\sigma(E,(L^1)^n)$, i.e. the coarsest topology on $E$ making $\xi\mapsto \langle \xi,g\rangle$ continuous for all $g\in (L^1)^n$.
Hence $F$ is weak$^\ast$-continuous and
\[
C:=\{\xi\in E:\ F(\xi)=X\} = F^{-1}(\{X\})
\]
is weak$^\ast$-closed.

Let $A_+:=\{X>0\}$, $A_-:=\{X<0\}$ and $A_0:=\{X=0\}$.
For each $i$ define
\[
B_i^+:=\{\xi\in E:\ 1_{A_+}\xi_i\ge 0\ \mathbb{P}\text{-a.s.}\},\qquad
B_i^-:=\{\xi\in E:\ 1_{A_-}\xi_i\le 0\ \mathbb{P}\text{-a.s.}\}.
\]
Using the characterization $Y\ge 0$ $\mathbb{P}$-a.s. $\Leftrightarrow$ $\langle Y,h\rangle:=\int Yh d\mathbb{P} \geq 0$
for all $h \in L^1_+(\mathbb{P})$, we can write
\[
B_i^+ = \bigcap_{h\in L^1_+}\{\xi:\ \langle \xi,(0,\dots,1_{A_+}h,\dots,0)\rangle\ge 0\},
\]
\[
B_i^- = \bigcap_{h\in L^1_+}\{\xi:\ \langle \xi,(0,\dots,-1_{A_-}h,\dots,0)\rangle\ge 0\},
\]
so $B_i^\pm$ are weak$^\ast$-closed as intersections of weak$^\ast$-closed half-spaces. Note that 
any measurable $A$, multiplication by $1_A$ maps $L^1$ into $L^1$. Thus, $1_{A_+}h \in L^1_+$ for $h \in L^1_{+}$. 
In addition, 
\[
B_i^0:=\{\xi:\ 1_{A_0}\xi_i=0\}
      =\bigcap_{h\in L^1}\{\xi:\ \langle \xi,(0,\dots,1_{A_0}h,\dots,0)\rangle=0\},
\]
which is weak$^\ast$-closed.
Set $B_i:=B_i^+\cap B_i^-\cap B_i^0$ and $B:=\cap_{i=1}^n B_i$.
Then $B$ is weak$^\ast$-closed and
\[
\Delta_{X}= C\cap B
\]
is weak$^\ast$-closed in $E$.

Since $\Delta_{X}$ is weak$^\ast$-closed and contained in the weak$^\ast$-compact
ball $r\,B_E$, it is weak$^\ast$-compact.
\end{proof}
\section{Pareto optimality and sub-convolution}\label{sec:appB}
\begin{proof}[Proof of Theorem \ref{thm:sup-convolution-pareto-optimal}]~ %

\noindent
(ii) $\Rightarrow$ (i): Suppose, contrary to our claim, that 
\[\exists (\zeta_i) \text{ with } U_i(\zeta_i) \geq U_i(\xi_i) \text{ such that } \exists i \text{ with }U_i(\zeta_i) > U_i(\xi_i). \]
Then we have 
\[\sum_{i \in N} U_i(\zeta_i) > \sum_{i \in N} U_i(\xi_i),\]
a contradiction.

\vspace{12pt}
 
 (i) $\Rightarrow$ (ii): 
 Let $\tilde{B}$ denote the set 
 \[\tilde{B}:= \{(U_i(\zeta_i))_{i \in N} \ \vert \ (\zeta_i)_{i \in N} \in \mathbb{A}(X)\},\]
and by $B$ the set 
\[B:=\tilde{B} - \mathbb{R}^n_{+},\]
where the "$-$" between the sets is the Minkowski difference. Furthermore, let $C$ be the set 
\[C:=\{(U_i(\xi_i))_{i \in N}\} + (0,\infty)^n,\]
where the "$+$" between the sets is the Minkowski sum. 
Clearly, $C$ is an open convex set and $B, C \ne \emptyset$.
We claim that $B$ is convex. Take arbitrary $x,x'\in B$.
By the definition of $B$, there exist
$\zeta,\zeta'\in\mathbb A(X)$ and
$r,s\in\mathbb R_+^n$ such that
\[
x=(U_i(\zeta_i))_{i\in N}-r,
\qquad
x'=(U_i(\zeta_i'))_{i\in N}-s.
\]
For $\alpha\in[0,1]$, set
\[
\bar\zeta
=
\alpha\zeta+(1-\alpha)\zeta'
\in\mathbb A(X).
\]
By the concavity of each $U_i$,
\[
q
:=
(U_i(\bar\zeta_i))_{i\in N}
-
\alpha(U_i(\zeta_i))_{i\in N}
-
(1-\alpha)(U_i(\zeta_i'))_{i\in N}
\in\mathbb R_+^n.
\]
Hence,
\[
\begin{aligned}
\alpha x+(1-\alpha)x'
&=
(U_i(\bar\zeta_i))_{i\in N}
-
\{q+\alpha r+(1-\alpha)s\}\\
&\in
\widetilde B-\mathbb R_+^n
=
B.
\end{aligned}
\]
Thus, $B$ is convex.

We claim that $B \cap C = \emptyset$. It follows from the Pareto optimality of $(\xi_i)_{i \in N}$. By the Hahn-Banach theorem (first-geometric form), there exists a closed hyperplane $H=[f=\alpha]$ which separates $B$ and $C$, where $f$ is a linear function on $\mathbb{R}^n$ and $\alpha \in \mathbb{R}$:
\begin{align}
\sup_{b \in B} \lambda \cdot b \leq \alpha \leq \inf_{c \in C} \lambda \cdot c. \label{eq:hyperplane}
\end{align}
Note that $u^{\ast}:=(U_i(\xi_i))_{i\in N} \in B$ and $(U_i(\xi_i)+y_i)_{i \in N} \in C$ for all $y_i  > 0$, $i \in N$. 
We deduce $\lambda_i \geq 0$ from the separation inequality. 
From \eqref{eq:hyperplane} for $t>0$
\[
\sup_{b \in B} \lambda  \cdot b \leq \alpha \leq \inf_{c \in C} \lambda \cdot c \leq \lambda \cdot u^{\ast} + t \lambda \cdot y.
\] 
Letting $t \rightarrow 0$ in the preceding inequality, we obtain 
\[\sup_{b \in B} \lambda \cdot b = \alpha = \lambda \cdot u^{\ast}.\]
Thus, the hyperplane $H$ is tangent to $B$ at $u^{\ast}$.
Let $D:=\{c \in \mathbb{R}^n \vert \sum_i c_i =0\}$. 
By the cash invariance of $U_i$ and the construction of $\mathbb{A}(X)$, we have  
\[\tilde{B}=\tilde{B}+D \quad \text{ and } \quad B=B+D.\]
We claim $\lambda \in D^{\bot}$. For any $c \in D$ and for any $b\in B$ 
we must have 
\[
\lambda \cdot (b+tc) \leq \alpha \quad \forall t \in \mathbb{R}.
\] This forces $\lambda \cdot c=0$ for all $c \in D$. 
Hence, $\lambda \in D^{\bot}$, i.e., $\lambda$ is proportional to $(1,\cdots, 1)$. We rescale $\lambda=c_0(1,\cdots, 1)$ with $c_0>0$ to $\lambda=(1,\cdots, 1)$. 
Therefore, (ii) holds, and we see 
\[
\mathrm{1} \cdot u^{\ast}=\sum U_i(\xi_i)=U_1\square \cdots \square U_n(X).
\]
\end{proof}
\section{Compactness of the price schedule set}\label{sec:appC}
\begin{proof}[Proof of Lemma \ref{lem:P_n-compact}] ~

\noindent
(Uniform equicontinuity) 
For $f \in \mathcal{P}_L$, we have $\Lip_d(f) \leq L$. Thus, for any $\varepsilon>0$ there is $\delta=\varepsilon/L$ such that 
if $d(x,y)<\delta$, then  
\[
|f(x)-f(y)|<\varepsilon \quad \forall f \in \mathcal{P}_L.
\] Hence $\mathcal{P}_L$ is uniformly equicontinuous. 

\noindent
(Uniform boundedness) 
Fix $f \in \mathcal{P}_L$. Let 
\[
M:=\max_{x \in K} f(x), \quad m:=\min_{x \in K} f(x),
\] which exist since $f$ is continuous on compact $K$. 

Because $\int f d\mu=0$ and $\mu$ is a probability measure, we must have 
\[
m \leq 0 \leq M.
\]
On the other hand, the Lipschitz property implies 
\[
M-m \leq L \mathrm{diam}_d(K).
\]
From $m \leq 0$, we see
\[
M \leq M-m \leq L \mathrm{diam}_d(K).
\]
From $M \geq 0$ we see
\[
m \geq M-L \mathrm{diam}_d(K) \geq -L \mathrm{diam}_d(K).
\]
Hence 
\[
\|f\|_{\infty} =\max\{|M|,|m|\} \leq L \mathrm{diam}_d(K),
\] and we see $\mathcal{P}_L$ is uniformly bounded.

Since $\mathcal{P}_L$ is uniformly bounded and uniformly equicontinuous, by Ascoli-Arzel\'a theorem, $\mathcal{P}_L$ is relatively compact in $(C(K), \|\cdot\|_{\infty})$. 

\noindent
(Closedness)
If $f_n\to f$ holds uniformly with $\Lip(f_n) \leq L$, then for all $x,y \in K$
\[
|f(x)-f(y)|=\lim_{n\to \infty}|f_n(x)-f_n(y)| \leq L d(x,y).
\] Thus, $\Lip(f) \leq L$. In addition, by uniform convergence $\int f d\mu=\lim \int f_n d\mu =0$. Hence, $\mathcal{P}_L$ is closed. 

A closed, relatively compact set is compact. This completes the proof. 
\end{proof}

\section{Best-response compactness lemma}\label{sec:appD}
\begin{proof}[Proof of Lemma \ref{lem:closed-set-A_2}] ~

From Lemma \ref{lem:Delta-compact}, $K$ is weak$^{\ast}$-compact.
Because $p \in P_2$ and $U_2$ are weak$^{\ast}$-continuous, the map $f=U_2-p$ is weak$^{\ast}$-continuous on $K$. 
On a compact space, a continuous function attains a maximum. Hence, $\mathbb{A}_p^2\neq\emptyset$. 
Write $M:=\sup_{\xi\in K} f(\xi)$. Then $\mathbb{A}_p^2$ is rewritten as 
\[
\mathbb{A}_p^2 \;=\; \{\xi\in K:\ f(\xi)=M\}.
\]
Since $\mathbb{A}^2_p$ is the preimage of the singleton set of 
$\{M\}$, it is a closed subset of $K$. 
Therefore, $\mathbb{A}_p^2$ is a closed subset of the compact set $K$, and thereby compact. 
\end{proof}
\section{The ba-space/weak$^{\ast}$ compactness lemma}\label{sec:appE}
\begin{proof}[Proof of Lemma \ref{lem:credal-set-compactness}] ~

For any $\mu \in ba(\Omega, \mathcal{F})$, the norm satisfies $\|\mu\|_{ba}=|\mu|(\Omega)$ (total variation). Thus, if $\mu$ is a probability charge, $\|\mu\|_{ba}=1$. Hence 
\[\mathcal{P}_i \subset B(ba):=\{\mu \in ba(\Omega, \mathcal{F}) \vert \|\mu\|_{ba} \leq 1\}.\] 
By the Banach–Alaoglu–Bourbaki theorem, $B(ba)$ is compact in the weak$^{\ast}$ topology $\sigma(ba, B_b)$. 

\vspace{12pt}
\noindent
(i) The constraints $\mu(\Omega)=1$ and $\mu\ge 0$ are weak$^{\ast}$-closed: for each $A\in \mathcal{F}$,
$\mu\mapsto \mu(A)=\langle \mathbf 1_A,\mu\rangle$ is continuous, hence
$\{\mu:\mu(\Omega)=1\}$ and $\bigcap_{A}\{\mu:\mu(A)\ge 0\}$ are closed. 
Domination $\mu\ll\mathbb P$ is $\bigcap_{N \in \mathcal{F}:\mathbb P(N)=0}\{\mu:\mu(N)=0\}$, also closed.

\vspace{12pt}
\noindent
(ii)
For each fixed pair $\xi,\eta\in K$ with $\xi\neq\eta$, define
\[
\phi_{\xi,\eta}(\mu)
:=
\frac{
|\langle \xi_i-\eta_i,\mu\rangle|
}{
d_{w^\ast}(\xi,\eta)
}.
\]
Since $\mu\mapsto\langle \xi_i-\eta_i,\mu\rangle$ is continuous
under $\sigma(ba,B_b)$, each $\phi_{\xi,\eta}$ is weak$^\ast$
continuous. Moreover,
\[
\phi(\mu)
=
\sup_{\substack{\xi,\eta\in K\\\xi\neq\eta}}
\phi_{\xi,\eta}(\mu).
\]
As an arbitrary supremum of continuous functions is lower
semicontinuous, $\phi$ is weak$^\ast$ lower semicontinuous, 
and $\ell_{i0}$-sublevel set $\{\mu \in ba(\Omega, \mathcal{F}) \vert \phi(\mu) \leq \ell_{i0}\}$ is 
 weak$^{\ast}$-closed. 

Intersecting the closed sets in (i) and (ii) with $B(ba)$ yields that $\mathcal P_i$ is weak$^{\ast}$-closed in a weak$^{\ast}$-compact set, hence weak$^{\ast}$-compact.
\end{proof}
\end{appendices}

\bibliography{jecon_ref3}
\end{document}